\documentclass[12pt]{article}
\pdfoutput=1
\usepackage[T1]{fontenc}
\usepackage{pdfsync}
\usepackage[utf8]{inputenc}
\usepackage[english]{babel}
\usepackage{nomencl}
\usepackage{eqnarray}
\usepackage{glossaries}
\usepackage{tocloft}
\usepackage{stmaryrd}
\usepackage{color}  
\usepackage{pgfplots}
\pgfplotsset{compat=newest}
\usepackage{cite}

\usepackage{mathabx, eufrak}

\usepackage{amscd,amsmath}
\usepackage{mathrsfs}
\usepackage{geometry}
\usepackage{lipsum}
\usepackage{imakeidx}
\usepackage{physics}

\usepackage{tikz-feynman}
\usepackage[pre-1.03]{feynmp}
\tikzfeynmanset{compat=1.1.0}

\usepackage{bbold}
\usepackage{bm}

\usepackage{cancel}

\usepackage{amsthm}

\usepackage{authblk}

\usepackage{hyperref}
\hypersetup{
    colorlinks=true,
    citecolor=blue,    
    linkcolor=red,     
    urlcolor=magenta   
}

\theoremstyle{definition}

\theoremstyle{remark}
\newtheorem*{remark}{Remark}

\theoremstyle{plain}
\newtheorem*{lemma}{Lemma}
\newtheorem*{theorem}{Theorem}
\newtheorem*{corollary}{Corollary}

\def\be#1\ee{\begin{align}#1\end{align}}
\newcommand{\la}{\label}
\newcommand{\van}{\scriptscriptstyle}

\def\d{\delta}

\newcommand{\bold}{\boldsymbol}

\newcommand{\ie}{\emph{i.e.} }
\newcommand{\eg}{\emph{e.g.} }

\newcommand{\cS}{\mathcal{S}}
\newcommand{\cM}{\mathcal{M}}

\newcommand{\cF}{\mathcal{F}}

\newcommand{\cH}{\mathcal{H}}
\newcommand{\cO}{\mathcal{O}}
\newcommand{\cL}{\mathcal{L}}

\newcommand{\cK}{\mathcal{K}}
\newcommand{\cQ}{\mathcal{Q}}
\newcommand{\cP}{\mathcal{P}}

\newcommand{\teps}{\tilde{\bold{\epsilon}}}
\newcommand{\eps}{{\bold{\epsilon}}}
\newcommand{\beps}{\bar{\bold{\epsilon}}}
\def\heq{\,\hat{=}\,}

\newcommand{\f}{\frac}

\newcommand{\lbr}{\llbracket}
\newcommand{\rbr}{\rrbracket}
\def\EH{\mathrm{EH}}

\newcommand{\pa}{\mathop{}\!\partial}

\makeatletter

\title{\Large\textbf{
On the geometric origin of central charges in gravity
}}

\author[1,2,3]{Lorenzo Di Giacomo
\footnote{\href{mailto:digiacomo.lorenzo@spes.uniud.it}{digiacomo.lorenzo@spes.uniud.it} }
}

\author[1,2]{Daniele Pranzetti
\footnote{\href{mailto:daniele.pranzetti@uniud.it}{daniele.pranzetti@uniud.it} }
}
\date{}

\affil[1]{\small\emph{Università degli Studi di Udine, via Palladio 8, I-33100 Udine, Italy}}
\affil[2]{\small\emph{Institute for Fundamental Physics of the Universe (IFPU), Via Beirut 2, 34151 Trieste, Italy}}
\affil[3]{\small\emph{Istituto Nazionale di Fisica Nucleare, sezione di Trieste, Italy}}

\begin{document}
\newgeometry{right=2.5cm, left=2.5cm}
\maketitle

\vspace{-1cm}
\begin{abstract}

Central extensions of asymptotic symmetry algebras are traditionally determined through an explicit computation of the algebra of conserved charges. We show that, within an ambiguity-free covariant phase space formalism based on the Noetherian split, they instead arise as intrinsic quantities completely determined by the Lagrangian through the geometry of the associated covariant phase space. We prove two general Charge Representation Theorems describing the canonical realization of diffeomorphism symmetries in the presence of field-dependent vector fields and Lagrangian anomalies, corresponding respectively to vanishing and integrable Noetherian flux. In both cases, we derive explicit expressions for the associated central extensions and establish the corresponding cocycle identities. This provides a constructive algorithm for determining central charges directly from the Lagrangian formulation of the theory, without requiring an explicit computation of the charge algebra.
As a non-trivial application, we consider asymptotically locally AdS$_3$ spacetimes with Penrose--Brown--Henneaux boundary conditions. We recover the corresponding centrally extended asymptotic symmetry algebra, providing an explicit realization of  the Charge Representation Theorems.

\end{abstract}

\newpage
\tableofcontents

\section{Introduction}

The construction of conserved charges associated with gauge symmetries is one of the central problems in classical and quantum gravity. Besides providing the canonical generators of gauge transformations, conserved charges encode fundamental physical information about gravitational systems, ranging from black hole mechanics and asymptotic symmetries to holography and the microscopic interpretation of gravitational entropy. In particular, the appearance of central extensions in the representation of asymptotic symmetry algebras has profound physical consequences, the most celebrated example being the Brown--Henneaux central charge governing the asymptotic symmetry algebra of three-dimensional anti-de Sitter gravity \cite{Brown:1986nw}. This result established the infinite-dimensional nature of the asymptotic symmetry group of AdS$_3$ gravity and laid the foundations for the AdS$_3$/CFT$_2$ correspondence.

The modern understanding of conserved charges is largely based on the covariant phase space formalism initiated by Ashtekar et al. \cite{Ashtekar:1981bq,Ashtekar:1990gc} and  further developed by Lee and Wald \cite{Lee:1990nz} and   Iyer and Wald \cite{Iyer:1994ys}. In parallel, a complementary cohomological approach was developed by Barnich and Brandt  \cite{Barnich:2001jy,Barnich:2003xg}, leading to equivalent expressions for conserved currents and asymptotic charges in a broad class of gravity and gauge theories. These constructions have become the standard framework for the analysis of gravitational symmetries and have found numerous applications in black hole thermodynamics, asymptotically flat and anti-de Sitter spacetimes, soft theorems and holography. A comprehensive review can be found in \cite{Compere:2018aar}.

Despite their remarkable success, several conceptual issues remain only partially understood. One of the most persistent concerns the ambiguity in the definition of the pre-symplectic potential. Given a Lagrangian, the latter is only defined up to the addition of an exact spacetime form, and different representatives generally lead to different expressions for charges and symplectic fluxes \cite{Iyer:1994ys,Compere:2018aar}. Although these ambiguities leave the equations of motion unaffected, they obscure the identification of canonical Hamiltonian generators and, consequently, of the corresponding symmetry algebra and its possible central extensions.

Additional subtleties arise whenever the symmetry generators become field-dependent. This situation naturally occurs after imposing gauge conditions or asymptotic boundary conditions and requires replacing the ordinary Lie bracket by a modified bracket taking into account the field dependence of the vector fields \cite{Barnich:2009se,Barnich:2010eb, Barnich:2011mi}. At the same time, the presence of a {non-integrable} symplectic flux through the boundary generally prevents the existence of Hamiltonian generators, so that the canonical Poisson bracket {representing the symmetry action and algebra} ceases to exist. Different prescriptions have been proposed to address this problem, including the Wald--Zoupas construction \cite{Wald:1999wa} and the Barnich--Troessaert bracket \cite{Barnich:2010eb,Barnich:2011mi}. More recently, these issues have been revisited from the perspective of extended phase spaces, edge modes, corner and asymptotic symmetries \cite{Donnelly:2016auv,Speranza:2017gxd, Geiller:2017xad, Geiller:2017whh, Geiller:2019bti,Freidel:2020xyx,Freidel:2020svx,Freidel:2020ayo,Freidel:2021cjp,Freidel:2021fxf, Chandrasekaran:2020wwn,Ruzziconi:2020wrb,Margalef-Bentabol:2020teu,Ciambelli:2021nmv,Ciambelli:2021vnn,Freidel:2021dxw,Chandrasekaran:2021vyu,Carrozza:2021gju,Adami:2021nnf,Wieland:2021eth,Ciambelli:2022cfr,Odak:2022ndm,Carrozza:2022xut,Nagy:2022xxs,Kabel:2023jve,Klinger:2023qna,FrancoisAndre:2023jmj,Baulieu:2024oql,Adami:2024gdx,Rignon-Bret:2024wlu,Langenscheidt:2024nyw,Baulieu:2024rfp,Neri:2025fsh,Golshani:2026lvc}, leading to a deeper understanding of the geometric origin of boundary degrees of freedom.

Three-dimensional gravity with negative cosmological constant provides an ideal arena in which these conceptual questions can be investigated explicitly. Besides its role in holography, the theory possesses a rich asymptotic symmetry structure while remaining sufficiently simple to allow for complete analytic control. The Brown--Henneaux boundary conditions \cite{Brown:1986nw} constitute the prototypical example of asymptotically AdS$_3$ boundary conditions, leading to two copies of the Virasoro algebra with classical central charge. Within the Fefferman--Graham expansion \cite{Fefferman:2007rka}, the asymptotic geometry admits a natural holographic interpretation in terms of the boundary metric and the holographic stress tensor \cite{Balasubramanian:1999re,Skenderis:1999nb,Skenderis:2000in,deHaro:2000vlm}.

More recently, considerable attention has been devoted to generalized asymptotically AdS$_3$ boundary conditions. In particular, allowing the boundary metric to fluctuate within its conformal class leads to the so-called Penrose--Brown--Henneaux  boundary conditions, where Weyl transformations become part of the asymptotic symmetry group and Lagrangian anomalies play an essential role. These generalized boundary conditions have been investigated from several complementary perspectives, including holographic renormalization, asymptotic symmetries and conformal anomalies, revealing a richer canonical structure than in the original Brown--Henneaux analysis \cite{Compere:2013bya,Troessaert:2013fma,Grumiller:2016pqb,Grumiller:2017sjh,Henneaux:2019sjx,Alessio:2020ioh}. They therefore provide a particularly suitable laboratory for studying the interplay between anomalies, integrability and central extensions within the covariant phase space formalism.

The present work builds upon the ambiguity-free formulation of the covariant phase space  introduced in \cite{Freidel:2020xyx,Freidel:2021cjp}. The central ingredient of this framework is the use of Anderson's homotopy operators \cite{anderson1992introduction} to associate a unique pre-symplectic potential to a given Lagrangian. As a consequence, the fundamental canonical relation naturally decomposes into two geometrically distinguished quantities, namely a Noether charge and a Noetherian flux. This \emph{Noetherian split} eliminates the traditional ambiguities affecting the definition of charges and fluxes, leaving only genuine boundary Lagrangian shifts whose effects can be analyzed systematically.

The main objective of this paper is to investigate the canonical realization of diffeomorphism symmetries within this ambiguity-free framework in the simultaneous presence of field-dependent vector fields and Lagrangian anomalies. Our first result consists in proving two general Charge Representation Theorems. The first theorem applies whenever the Noetherian flux vanishes. In this case, the Noether charge itself provides a Hamiltonian realization of the symmetry algebra through the canonical Poisson bracket, and we derive an explicit expression for the associated field-independent central extension. The second theorem addresses the more general situation in which the Noetherian flux does not vanish but is nevertheless integrable. In this case the Hamiltonian generator is obtained by subtracting the integrable flux contribution from the Noether charge, and we derive the corresponding expression for the central extension.

These results  extend the scope of the ambiguity-free covariant phase space construction. In particular, they provide a constructive algorithm for determining central charges directly from the Lagrangian formulation of the theory. Once the unique pre-symplectic potential has been constructed and the Noetherian split has been identified, the determination of the central extension requires neither an explicit computation of the charge algebra nor the introduction of additional prescriptions. Instead, the central charge follows directly from the geometric structures defining the covariant phase space. In the case of an integrable Noetherian flux, we further prove that the resulting central extension is invariant under boundary Lagrangian shifts preserving the integrability properties of the theory, showing that it is an intrinsic property of the covariant phase space rather than an artefact of a particular boundary representative.  In this sense, the Noetherian split not only provides a natural separation between charges and fluxes, but also identifies the geometric mechanism responsible for the appearance of central charges in the representation of diffeomorphism symmetries.

As a non-trivial application of the general formalism, we consider three-dimensional Einstein gravity with negative cosmological constant subject to Penrose--Brown--Henneaux boundary conditions.  This framework naturally gives rise to Lagrangian anomalies and to a non-vanishing but integrable Noetherian flux, providing an ideal testing ground for the general results established in the first part of the paper.

Applying our formalism, we derive the Hamiltonian generators directly from the Einstein--Hilbert Lagrangian and we test our prescription for determining the central charge of the theory. More precisely, we compute first the corresponding central extension  explicitly evaluating the charge algebra and we then verify that this is reproduced exactly by the general expression obtained in the second Charge Representation Theorem. The theorem reproduces the central charge previously obtained in the literature for Penrose--Brown--Henneaux boundary conditions, thereby providing a first-principles derivation of the canonical representation of the asymptotic symmetry algebra. Ordinary Brown--Henneaux boundary conditions are then recovered as the special case in which the Noetherian flux vanishes, furnishing a direct realization of the first Charge Representation Theorem.

The paper is organized as follows. In Section~2 we review the ambiguity-free covariant phase space formalism and  the Noetherian split in the presence of field-dependent diffeomorphisms and Lagrangian anomalies. We then prove two Charge Representation Theorems, corresponding respectively to vanishing and integrable Noetherian flux, and derive explicit expressions for the associated central extensions. In Section~3 we apply the general construction to asymptotically locally AdS$_3$ spacetimes with Penrose--Brown--Henneaux boundary conditions, derive the Hamiltonian generators and reproduce the corresponding centrally extended asymptotic symmetry algebra. Technical derivations are collected in the appendices.

\paragraph{Notation and conventions:\label{NC}} Throughout this paper, we use the convention $8\pi G=1=c$.  The spacetime $\cM$ is equipped with a Lorentzian metric $g_{\mu\nu}$  with a signature ($-++$).   
Greek letters denote the 3-dimensional metric and Latin letters the induced 2-dimensional metric on the time-like cylinder. Cartan's calculus in spacetime consists of the following notation: $\iota_\xi$ for the  vector field $\xi$ contraction on spacetime forms, $d$ the exterior derivative and $\cL_\xi$ the Lie derivative. In field-space we have $I_\xi$ to denote the contraction on field-space forms, $\delta$ and $\delta_\xi$ for the derivative operators respectively. In particular, we use the boldface notation to denote the spacetime $p$-forms. The on-shell of the bulk constraints/equations of motion equality is denoted by the symbol $\heq$.

The volume-form is denoted by $\bold{\epsilon}= 
\sqrt{|g|} d^3 x$, with orientation $d^3 x=dt\wedge dr\wedge d\phi$. Upon a choice of foliation, we introduce a $2+1$ decomposition
in terms of a codimension-1 space-like slices $\Sigma$ such that $\cM=\Sigma\times\mathbb{R}$. Each slice is equipped with a normal 1-form $\bold{n}=n_\mu d x^\mu$, which  we normalize  by demanding that it satisfies $g^{\mu\nu}n_\mu n_\nu=-1$. The metric-dependent normal vector is denoted  by $\hat{n}=n^\mu\partial_\mu$.
The slice has an induced metric $\tilde{g}_{\mu\nu}:= g_{\mu\nu}+n_\mu n_\nu$. We denote the volume form on $\Sigma$ by $\teps$; this is related to the spacetime volume form by $\eps=\bold{n}\wedge\teps$, which means $\teps=-\iota_{\hat{n}} \eps$.
The volume form contraction  $\eps_{\mu}=\iota_{\pa_\mu}\eps $
denotes a basis of  codimension-1 forms,  
In this notation, the Stokes theorem reads
\be\la{Stokes}
\int_\cM \eps\nabla_\mu v^\mu=\int_\cM d(v^\mu\eps_\mu)=\int_{\pa \cM}\eps_\mu v^\mu=\int_{\Sigma}\teps\,n_\mu v^\mu,
\ee

Since we are interested in AdS$_3$ spacetime, we also consider a time-like boundary $\Gamma$, with space-like normal 1-form $\bold{s}=s_\mu dx^\mu$ and vector $\hat{s}=s^\mu\pa_\mu$;  the associated induced metric is denoted by $\bar{g}_{\mu\nu}:= g_{\mu\nu}-s_\mu s_\nu$. Its volume form is such that $\eps=-\bold{s} \wedge\beps$, with $\beps=-\iota_{\hat s} \eps=\sqrt{|\bar g|}\,dt\wedge d\phi$. In this case, the Stokes theorem \eqref{Stokes} contains a second contribution on the RHS at the time-like cylinder with $(\Sigma,\teps, n_\mu)$ replaced by $(\Gamma,\beps, s_\mu)$. The boundary $\Gamma$ intersect the space-like $\Sigma$ at a codimension-2 corner $S$ (a circle on the cylinder at
spatial infinity, in the present case), and its associated volume form is given by $\bar{\tilde{\eps}}=\iota_{\hat{s}}\iota_{\hat{n}}\eps=\sqrt{\bar{\tilde{g}}}\,d\phi$, with $\bar{\tilde{g}}_{\mu\nu}:=\tilde{g}_{\mu\nu}-s_\mu s_\nu$.

\section{Ambiguity-free covariant phase space}\label{sec:ECS}

In this section we develop the covariant phase space formalism that we apply later to the case of  asymptotically locally AdS$_3$  spacetimes. Our approach is based on the Noetherian split introduced in \cite{Freidel:2021cjp}. We begin our treatment by recalling the main elements of the covariant phase space formalism  and the construction of a unique pre-symplectic potential $\bold{\theta}=\bold{\theta}^L$ starting from a given Lagrangian $\bold{L}$, and defining the Noetherian split between charge and flux. We then specialize to the case of a vanishing Noetherian flux and of an integrable Noetherian flux; in both cases we prove the corresponding {\it Charge representation theorem}, giving the explicit expression of the central charge.

\subsection{Noetherian split }\label{sec:NP}

The often times emphasized ambiguities in  $\bold{\theta}$  \cite{Wald:1999wa}, can be completely removed by applying Anderson's homotopy operators \cite{anderson1992introduction,Compere:2018aar} to derive a pre-symplectic potential \cite{Lee:1990nz, Freidel:2020xyx} such that, given $\bold{L}$,  the relation 
\begin{equation} 
\delta \bold{L} = d\bold{\theta} -\bold{E}\,,
\la{dL}
\end{equation}
where $\bold{E}$ are the equations of motion, follows directly from its definition. 
 From the field-derivative $\d$ of the pre-symplectic potential, we can define the pre-symplectic 2-form $\Omega$ by integrating it on a codimension-1 Cauchy slice $\Sigma$ \begin{equation}
 \Omega:= \var\int_\Sigma \bold{\theta}=\var {\Theta}.
 \end{equation}
The knowledge of the pre-symplectic 2-form allows us to derive the symmetry
charges of the theory and their  algebra. However, application of the  second Noether theorem receives a first layer of complexity when the vector field symmetry generators $\xi$
are field-dependent, \ie $\delta\xi\neq0$, happening for instance when working in a fixed coordinate-gauge, as often times the case when studying asymptotic symmetries. 
In this case, one has to introduce a modified Lie bracket, denoted by the double square brackets $\lbr \cdot, \cdot \rbr $, which takes into account the field-dependence and   is defined in the following manner  \cite{Barnich:2010eb}
\begin{equation}  
\lbr \xi, \zeta \rbr := [\xi, \zeta]_{\text{Lie}} +\delta_\zeta \xi -\delta_\xi \zeta,
\label{mod_lie}
\end{equation}
and such that the commutator of two  symmetry actions
 $\delta_\xi = \delta I_\xi +I_\xi \delta$ satisfies
 \be
 [ \delta_\xi,  \delta_\chi]= -\delta_{ \lbr \xi,\chi\rbr}\,.
 \label{maincom}
 \ee

Moreover, in the presence of gauge-fixings and imposition of boundary conditions,
the actions of spacetime  Lie derivative ${\cL_\xi}= d \iota_\xi +\iota_\xi d$ and field-space Lie derivative $\delta_\xi $ no longer coincide in general, \ie  $\delta_\xi \neq \cL_\xi$. In order to express this discrepancy,
one introduces the field contraction $I_{\delta \xi}$ and defines the {\it anomaly operator} \cite{Chandrasekaran:2020wwn,Freidel:2021cjp,Freidel:2021fxf}
\begin{equation}
\Delta_\xi := \delta_\xi - \cL_\xi - I_{\delta \xi}\,.
\la{ano}
\end{equation}

We are usually interested in theories where the Lagrangian is semi-covariant under diffeomorphisms, \ie the action of the  anomaly operator on the Lagrangian consists of a purely boundary term
\be
\Delta_\xi \bold{L}=d\bold{a}_\xi\,,
\label{DL}
\ee
with $\bold{a}_\xi$ denoted the {\it Lagrangian anomaly}. These anomalies can appear when the theory Lagrangian does not vanish on-shell, like in the presence of a non-vanishing cosmological constant, and they play a central role in the analysis of asymptotically locally AdS$_3$ spacetimes that we perform in second part of this work.

The anomaly operator action \eqref{DL},   combined with \eqref{dL}, implies that the action of the  anomaly operator on the pre-symplectic potential  
can be written as
\begin{equation}
 \Delta_\xi \bold{\theta}=\delta \bold{a}_\xi -\bold{a}_{\var\xi}+d\bold{A}_\xi,
\end{equation}
where $\bold{A}_\xi$ is a field-space 1-form called the \emph{symplectic anomaly}.

In the presence of field-dependence and anomalies the on-shell
Noether charge and the Noetherian flux are  respectively given by  \cite{Freidel:2021cjp}
\begin{equation}
\cQ_\xi \heq \int_S \bold{q}_\xi\,, \qquad \text{where}\quad d\bold{q}_\xi \heq I_\xi\bold{\theta} - \iota_\xi \bold{L}-\bold{a}_\xi\,,
 \label{charge_exp}
\end{equation} 
and
\begin{equation}
\cF_\xi \heq   \cF_\xi^{\theta} + \int_S (\bold{q}_{\delta\xi}+\bold{A}_\xi)\,, \qquad \text{where}\quad \cF_\xi^{\theta} = \int_S \iota_\xi\bold{\theta}
\label{flux}
\end{equation}
is the {\it symplectic flux}, and $S=\pa\Sigma$ is a codimension-2 surface often referred to as the {\it corner}. These are such that the on-shell {\it fundamental canonical relation}
\begin{equation}
-I_\xi\Omega \heq \delta \cQ_\xi -\cF_\xi
\label{ham_ch}
\end{equation}
is satisfied.

If one shifts the original Lagrangian by a boundary Lagrangian $\bold{\ell}$,   one can again uniquely associate a pre-symplectic potential to the new Lagrangian $\bold{L}'= \bold{L}+d\bold{\ell}$, which is given by
\begin{equation}
 \bold{\theta}' = \bold{\theta}+ \delta\bold{\ell}  -d\bold{\vartheta}\,,
 \la{theta-shift}
 \end{equation}
where $\bold{\vartheta}$ is the corner pre-symplectic potential, again uniquely associated with the boundary Lagrangian by means of the proper Anderson’s homotopy operator. The new Lagrangian defines a new Noether charge and Noetherian flux related to the previous ones by the following relations \cite{Freidel:2021cjp}
\begin{equation}
\cQ_\xi' = \cQ_\xi + \int_S (\iota_\xi\bold{\ell} -I_\xi \bold{\vartheta})\,,
\qquad
\cF_\xi' = \cF_\xi +\int_S (\delta\iota_\xi\bold{\ell}-\delta_\xi\bold{\vartheta})
\label{mod_QF}\,.
\end{equation}
Due to the extra terms in the pre-symplectic potential \eqref{theta-shift}, the fundamental canonical relation \eqref{ham_ch} is preserved by these shifts. 
We now specialize to the cases where the Noetherian flux is either vanishing or integrable; these are of relevance for AdS$_3$ spacetimes, which provide an arena to test the charge representation theorems we prove in the following. 

\subsection{Vanishing Noetherian flux}\la{sec:F=0}

When the Noetherian flux  vanishes, \ie $\cF_\xi=0$,  the Noether charge is integrable and is the canonical generator of the symmetry, namely it is the Hamiltonian charge. Under this condition of integrability, we can define a canonical Poisson bracket 
 \begin{equation}
     \{\cQ_\xi, \cQ_\zeta\}_L := -I_\xi I_\zeta \Omega =\var_\xi\cQ_\zeta. \label{Poi}
 \end{equation}
In order to state the charge representation theorem, it is necessary to enunciate the following lemma \cite{Freidel:2021cjp}.
\begin{lemma}
    If the ($d$-1)$th$-homology group of a $d$-dimensional spacetime $\cM$ is trivial, i.e. $H_{d-1}(\cM)=0$, then there exists a codimension-2 form $\bold{c}_{(\xi,\zeta)}$ such that 
    \begin{equation}
        d\bold{c}_{(\xi,\zeta)} = \Delta_\xi\bold{a}_\zeta-\Delta_\zeta\bold{a}_\xi +\bold{a}_{\lbr \xi, \zeta \rbr }\,,
    \end{equation}
and which is left invariant by Lagrangian shifts.
\end{lemma}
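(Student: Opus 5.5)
The plan is to show that the right-hand side
\[
\bold{C}_{(\xi,\zeta)}:=\Delta_\xi\bold{a}_\zeta-\Delta_\zeta\bold{a}_\xi+\bold{a}_{\lbr \xi,\zeta\rbr}
\]
is a closed codimension-1 form. The triviality of the relevant (co)homology then supplies a primitive $\bold{c}_{(\xi,\zeta)}$; the hypothesis $H_{d-1}(\cM)=0$ is meant to play the role of a Poincaré-lemma-type statement in degree $d-1$. Since $d$ commutes with $\delta$, $I_{\delta\xi}$ and $\cL_\xi$, it also commutes with $\Delta_\xi$. Using \eqref{DL} we therefore get
\[
d\bold{C}_{(\xi,\zeta)}=\Delta_\xi\Delta_\zeta\bold{L}-\Delta_\zeta\Delta_\xi\bold{L}+\Delta_{\lbr\xi,\zeta\rbr}\bold{L}.
\]
Closedness thus reduces to the operator identity
\[
[\Delta_\xi,\Delta_\zeta]=-\Delta_{\lbr\xi,\zeta\rbr}
\]
acting on $\bold{L}$, which is a zero-form in field space.

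To prove this identity I expand $\Delta_\xi=\delta_\xi-\cL_\xi-I_{\delta\xi}$ and compute the commutator term by term.
\begin{itemize}
\item By \eqref{maincom}, $[\delta_\xi,\delta_\zeta]=-\delta_{\lbr\xi,\zeta\rbr}$.
\item The spacetime part gives $[\cL_\xi,\cL_\zeta]=\cL_{[\xi,\zeta]}$. When the vector fields are field dependent, the cross terms $[\delta_\xi,\cL_\zeta]=\cL_{\delta_\xi\zeta}$ also appear.
\item The terms containing $I_{\delta\xi}$ annihilate field-space zero-forms such as $\bold{L}$, but they matter in intermediate steps. I will handle them with the Cartan-type identities $[\delta_\xi,I_\chi]=I_{\delta_\xi\chi}$ and so on.
\end{itemize}
Collecting terms, the combination $[\xi,\zeta]+\delta_\zeta\xi-\delta_\xi\zeta$ should reassemble into $\lbr\xi,\zeta\rbr$ through \eqref{mod_lie}. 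The sign conventions must match the minus sign in \eqref{maincom}.

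This bookkeeping of signs and field-dependent cross terms is the step I expect to be the main obstacle. I would first verify the identity on covariant quantities, where $\Delta$ vanishes identically, as a consistency check. I would then work on $\bold{L}$ directly, using $\Delta_\xi\bold{L}=d\bold{a}_\xi$. If needed, I would apply the same computation to $\bold{\theta}$ to fix any residual $I_{\delta\xi}$ contributions.

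Having $d\bold{C}_{(\xi,\zeta)}=0$, I invoke the hypothesis on the topology to write $\bold{C}_{(\xi,\zeta)}=d\bold{c}_{(\xi,\zeta)}$. Following \cite{Freidel:2021cjp}, one should really use the local (variational-bicomplex) exactness of forms built covariantly from the fields, via Anderson's homotopy operators. That construction also fixes $\bold{c}_{(\xi,\zeta)}$ canonically from the $\bold{a}$'s, and this canonical choice is what will make the invariance statement meaningful.

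For invariance under $\bold{L}\to\bold{L}+d\bold{\ell}$, I compute the shifted anomaly. We have $\Delta_\xi d\bold{\ell}=d\Delta_\xi\bold{\ell}$, so $\bold{a}'_\xi=\bold{a}_\xi+\Delta_\xi\bold{\ell}$, up to a closed (hence exact) term that is fixed by the homotopy prescription. Then
\[
\bold{C}'_{(\xi,\zeta)}-\bold{C}_{(\xi,\zeta)}=\bigl([\Delta_\xi,\Delta_\zeta]+\Delta_{\lbr\xi,\zeta\rbr}\bigr)\bold{\ell}=0
\]
by the same operator identity, now applied to the field-space zero-form $\bold{\ell}$. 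Hence $\bold{C}$, and with it the canonically chosen $\bold{c}_{(\xi,\zeta)}$, is unchanged.
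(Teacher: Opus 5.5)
Your proposal is correct and is essentially the argument the paper takes from \cite{Freidel:2021cjp}. Since $d$ commutes with $\Delta_\xi$, closedness of the right-hand side reduces to $[\Delta_\xi,\Delta_\zeta]=-\Delta_{\lbr\xi,\zeta\rbr}$ on field-space zero-forms; the topological hypothesis then supplies $\bold{c}_{(\xi,\zeta)}$, and shift invariance follows from $\bold{a}'_\xi=\bold{a}_\xi+\Delta_\xi\bold{\ell}$ together with the same identity applied to $\bold{\ell}$.

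The $I_{\delta\xi}$ terms never enter at any stage, because $\bold{L}$, $\Delta_\zeta\bold{L}$ and $\bold{\ell}$ are all field-space zero-forms, so your Cartan-identity detour is unnecessary. As written, that identity is also wrong: the correct commutator is $[\delta_\xi,I_\chi]=-I_{\lbr\xi,\chi\rbr}$, not $I_{\delta_\xi\chi}$.
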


\begin{theorem}[\textbf{Charge representation theorem 1}]
    For Noether charges with vanishing Noetherian flux, the algebra given by the Poisson bracket in \eqref{Poi} is isomorphic to the Lie algebra of diffeomorphisms \eqref{mod_lie} up to a central extension, 
    \begin{equation}
     \{\cQ_\xi, \cQ_\zeta\}_L \heq -\cQ_{\lbr \xi,\zeta \rbr } - \cK^L_{(\xi, \zeta)}\,,
     \label{ch_al}
    \end{equation}
    where 
    \begin{equation}
         \cK^L_{(\xi, \zeta)}=\int_{\cS}\iota_\xi\iota_\zeta \bold{L} + \int_\cS(\iota_\xi \bold{a}_\zeta-\iota_\zeta \bold{a}_\xi)+\int_\cS \bold{c}_{(\xi,\zeta)} 
         \label{cocy}
    \end{equation}
    {is a field-space constant, namely a central charge,}
    and the 2-cocycle identity holds
    \begin{equation}
        \var_\xi \cK^L_{(\zeta, \eta)}+ \cK^L_{(\lbr \xi, \zeta]\! ], \eta)} + \text{cycl.} =0. \label{id_co}
    \end{equation}
\end{theorem}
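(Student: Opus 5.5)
The strategy is to evaluate $\delta_\xi\cQ_\zeta$ directly at the level of the charge density $\bold{q}_\zeta$, using the decomposition $\delta_\xi=\Delta_\xi+\cL_\xi+I_{\delta\xi}$ from \eqref{ano}, and to match the result against $-\cQ_{\lbr\xi,\zeta\rbr}$. It is simplest to work first with the exterior derivative of the densities and then integrate over $\cS$. From \eqref{charge_exp} we have $d\bold{q}_\zeta\heq I_\zeta\bold{\theta}-\iota_\zeta\bold{L}-\bold{a}_\zeta$. We apply $\delta_\xi$ to both sides. Since $\delta_\xi$ commutes with $d$, the left-hand side is $d(\delta_\xi\bold{q}_\zeta)$. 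On the right we use $\delta_\xi I_\zeta=I_\zeta\delta_\xi-I_{\lbr\xi,\zeta\rbr}$ together with the field-dependence terms $I_{\delta_\xi\zeta}$, which is a consequence of \eqref{maincom}. We also use $\delta_\xi\bold{L}=\cL_\xi\bold{L}+d\bold{a}_\xi$, $\delta_\xi\bold{\theta}=\cL_\xi\bold{\theta}+I_{\delta\xi}\bold{\theta}+\delta\bold{a}_\xi-\bold{a}_{\delta\xi}+d\bold{A}_\xi$, and $\delta_\xi\bold{a}_\zeta=\Delta_\xi\bold{a}_\zeta+\cL_\xi\bold{a}_\zeta+\bold{a}_{\delta_\xi\zeta}$. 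Collecting terms, the goal is an identity of the form
\begin{equation*}
d\big(\delta_\xi\bold{q}_\zeta+\bold{q}_{\lbr\xi,\zeta\rbr}+\iota_\xi\iota_\zeta\bold{L}+\iota_\xi\bold{a}_\zeta-\iota_\zeta\bold{a}_\xi+\bold{c}_{(\xi,\zeta)}\big)\heq d(\text{flux-type terms}),
\end{equation*}
where the flux-type terms combine into $I_\zeta$ applied to the density of $\cF_\xi$ together with $\iota$-contractions of $I_\xi\bold{\theta}$.

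The core of the computation is the Cartan calculus. We write $\cL_\xi=d\iota_\xi+\iota_\xi d$ on $\iota_\zeta\bold{L}$, $\bold{a}_\zeta$ and $I_\zeta\bold{\theta}$, and use $[\cL_\xi,\iota_\zeta]=\iota_{[\xi,\zeta]}$. Applied to the top form, this gives $\iota_\zeta\cL_\xi\bold{L}-\cL_\xi\iota_\zeta\bold{L}=-\iota_{[\xi,\zeta]}\bold{L}$ and $\iota_\xi\iota_\zeta d\bold{L}=0$. These produce the exact term $d(\iota_\xi\iota_\zeta\bold{L})$ and the Lie-bracket part of $\iota_{\lbr\xi,\zeta\rbr}\bold{L}$. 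The field-dependence pieces $\delta_\zeta\xi-\delta_\xi\zeta$ supply the remainder of the modified bracket \eqref{mod_lie}. The anomaly terms $\iota_\zeta d\bold{a}_\xi=\cL_\zeta\bold{a}_\xi-d\iota_\zeta\bold{a}_\xi$, combined with $\Delta_\xi\bold{a}_\zeta$, reorganize into $d(\iota_\xi\bold{a}_\zeta-\iota_\zeta\bold{a}_\xi)$ plus $\Delta_\xi\bold{a}_\zeta-\Delta_\zeta\bold{a}_\xi+\bold{a}_{\lbr\xi,\zeta\rbr}$. By the Lemma this last combination equals $d\bold{c}_{(\xi,\zeta)}$, and this is exactly where the Lemma enters. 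After integration over the closed corner $\cS$, every remaining term is either a total derivative, which integrates to zero, or proportional to the Noetherian flux contracted with $I_\zeta$. The latter vanishes under the hypothesis $\cF_\xi=0$, which should hold as a field-space identity along the symmetry directions. Because $H_{d-1}(\cM)=0$, closed codimension-2 forms on $\cS$ that are derivatives integrate trivially, so we can lift the identity from exterior derivatives to the densities themselves. This yields \eqref{ch_al}, with $\{\cQ_\xi,\cQ_\zeta\}_L=\delta_\xi\cQ_\zeta$ from \eqref{Poi}.

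To show that $\cK^L$ is central, we take $\delta$ of \eqref{ch_al}. The left-hand side is $-I_\xi I_\zeta\Omega$, and by \eqref{ham_ch} with $\cF=0$ one has $I_\zeta\Omega=-\delta\cQ_\zeta$. Using $\delta^2=0$ and the standard identity for $\delta I_\xi I_\zeta\Omega$ when $\Omega$ is closed, we get $\delta(I_\xi I_\zeta\Omega)=-I_{\lbr\xi,\zeta\rbr}\Omega$ up to terms that vanish with the fluxes. Hence $\delta(\delta_\xi\cQ_\zeta+\cQ_{\lbr\xi,\zeta\rbr})=0$, which gives $\delta\cK^L_{(\xi,\zeta)}=0$. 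The cocycle identity \eqref{id_co} then follows from the Jacobi identity of the Poisson bracket, which holds because $\Omega$ is closed and the bracket is implemented by $\delta_\xi$. Concretely, we apply $\delta_\xi$ to \eqref{ch_al} for the pair $(\zeta,\eta)$, use \eqref{maincom}, sum cyclically, and use the Jacobi identity for $\lbr\cdot,\cdot\rbr$. All $\cQ$ terms cancel and only \eqref{id_co} remains. Since $\cK^L$ is field-independent, its $\delta_\xi$ term vanishes, but we keep it so that the statement matches.

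The main obstacle is the bookkeeping of field-dependent contractions, in particular $\delta_\xi I_\zeta$ versus $I_\zeta\delta_\xi$ and the $I_{\delta\xi}$ and $\bold{a}_{\delta\xi}$, $\bold{q}_{\delta\xi}$ terms. These must assemble exactly into $\lbr\xi,\zeta\rbr$ and into the flux density $\iota_\xi\bold{\theta}+\bold{q}_{\delta\xi}+\bold{A}_\xi$, so that the vanishing of $\cF$ removes them. I would first carry out this step for field-independent $\xi,\zeta$ as a check, and then reinstate the $\delta\xi$ terms.
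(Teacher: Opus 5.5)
Your proposal is correct, and its skeleton matches the paper's proof. It has three parts: an identity $\delta_\xi\cQ_\zeta\heq-\cQ_{\lbr\xi,\zeta\rbr}-\cK^L_{(\xi,\zeta)}+(\text{flux term})$; the same $\delta$-computation for centrality, based on $[\delta_\xi,I_\zeta]=-I_{\lbr\xi,\zeta\rbr}$ and $\delta^2=0$; and closedness of $\Omega$ for the cocycle identity. The difference is where the work is done.

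The paper never works with densities. It imports from Freidel et al.\ the general identity $-I_\xi I_\zeta\Omega\heq\delta_\xi\cQ_\zeta-\delta_\zeta\cQ_\xi+\cQ_{\lbr\xi,\zeta\rbr}+\cK^L_{(\xi,\zeta)}$ and then removes $I_\zeta I_\xi\Omega+\delta_\zeta\cQ_\xi$ using \eqref{ham_ch}. Your Cartan-calculus computation is what lies behind that identity; carried out, it gives
\[
d\big(\delta_\xi\bold{q}_\zeta+\bold{q}_{\lbr\xi,\zeta\rbr}+\iota_\xi\iota_\zeta\bold{L}+\iota_\xi\bold{a}_\zeta-\iota_\zeta\bold{a}_\xi+\bold{c}_{(\xi,\zeta)}\big)\heq d\,I_\zeta\big(\iota_\xi\bold{\theta}+\bold{q}_{\delta\xi}+\bold{A}_\xi\big),
\]
so the remainder is exactly $I_\zeta\cF_\xi=I_\zeta I_\xi\Omega+\delta_\zeta\cQ_\xi$. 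This makes the proof self-contained and yields the Barnich--Troessaert form of the paper's closing remark at no extra cost.

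The price is bookkeeping, and two points need care:
\begin{itemize}
\item In the paper's convention (Appendix E), $\Delta_\xi\bold{a}_\zeta=\delta_\xi\bold{a}_\zeta-\cL_\xi\bold{a}_\zeta$, with $\delta_\xi$ also varying $\zeta$. If you split off $\bold{a}_{\delta_\xi\zeta}$ as you propose, the combination you actually meet is $\Delta_\xi\bold{a}_\zeta-\Delta_\zeta\bold{a}_\xi+\bold{a}_{[\xi,\zeta]}$ with the Lie bracket. This coincides with the Lemma's expression only after you recombine.
\item Lifting the $d$-identity to corner integrals does not follow from $H_{d-1}(\cM)=0$; that hypothesis is what produces $\bold{c}$. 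What you need is that the leftover closed codimension-2 form integrates to zero on $\cS$, e.g.\ by Stokes on $\Sigma$ when $\cS=\partial\Sigma$.
\end{itemize}

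For the cocycle identity, the paper contracts $\delta\Omega=0$ with three generators and substitutes the general identity. So \eqref{id_co} holds there for arbitrary flux, without first knowing that $\cK^L$ is constant. Your route (apply $\delta_\xi$ to \eqref{ch_al} and sum cyclically, or invoke Jacobi for the Poisson bracket) uses both $\cF=0$ and centrality. Without $\delta\cK^L=0$, the cyclic sum does not reproduce all three $\delta\cK^L$ terms of \eqref{id_co}. This is harmless here, because you prove centrality first.
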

\vspace{0.05cm}
\begin{proof}
Using the charge expression in \eqref{charge_exp} and substituting into \eqref{Poi}, we obtain (see Appendix A.5 in \cite{Freidel:2021cjp} for details),  
\begin{equation}
    \begin{aligned}
        \var_\xi \cQ_\zeta &\heq    -\cQ_{\lbr \xi,\zeta\rbr } -\cK^L_{(\xi, \zeta)} + I_\zeta I_\xi \Omega + \var_\zeta \cQ_\xi.
    \end{aligned}
    \label{14}
\end{equation}
Using the fundamental canonical relation \eqref{ham_ch}, where $\cF_\xi=0$, we obtain
\begin{equation}
\var_\xi \cQ_\zeta \heq -\cQ_{\lbr \xi,\zeta\rbr } -\cK^L_{(\xi, \zeta)},
\la{dxiQ}
\end{equation}
from which  \eqref{ch_al} follows immediately. 

To prove that $\cK^L_{(\xi, \zeta)}$ is a central charge, namely a field-independent cocycle, we compute
\be
\delta \cK^L_{(\xi, \zeta)}&\heq
-\delta \var_\xi \cQ_\zeta
-\delta\cQ_{\lbr \xi,\zeta\rbr }
\cr
&=(\var_\xi I_\zeta +I_{\lbr \xi,\zeta\rbr })\Omega
\cr
&=I_\zeta \var_\xi  \Omega
\cr
&=I_\zeta \var I_\xi  \Omega
\cr
&=-I_\zeta \var^2 \cQ_\xi  
\cr
&=0\,,
\ee
where we used the commutator 
\be
[\var_\xi, I_\zeta] = -I_{\lbr \xi, \zeta \rbr }\,,
\la{dxiI}
\ee
 and the field differential property $\var^2=0$.

To prove the cocycle identity, we recall that the pre-symplectic 2-form is closed, then
\begin{equation}
    I_\xi I_\zeta I_\eta \var\Omega=0
\end{equation}
and using repeatedly the previous commutation relation \eqref{dxiI}, one can prove\footnote{We denote $\Omega(\var_\zeta, \var_\eta)=I_\eta I_\zeta \Omega$.}
\begin{equation}
    \var_\xi \Omega(\var_\zeta, \var_\eta) + \Omega(\var_{\lbr \xi, \zeta \rbr }, \var_\eta) + \text{cycl.}=0 \,,
    \label{id_om}
\end{equation}
where  cycl. refers to the cyclic permutation of $(\xi,\zeta,\eta)$. Next, rewriting  the identity \eqref{14} as
\begin{equation}
    \Omega(\var_\zeta, \var_\eta) \heq \var_\zeta \cQ_\eta - \var_\eta \cQ_\zeta + \cQ_{\lbr \zeta, \eta\rbr } +\cK^L_{(\zeta,\eta)} \,,
    \la{OQ}
\end{equation}
the 
 charge  terms in \eqref{OQ} can be shown to drop out of \eqref{id_om}, as
\begin{equation}
\begin{aligned}
    &\var_\xi \var_\zeta \cQ_\eta - \var_\xi\var_\eta \cQ_\zeta + \var_\xi\cQ_{\lbr \zeta, \eta\rbr } + \var_{\lbr \xi, \zeta \rbr } \cQ_\eta - \var_\eta \cQ_{\lbr \xi, \zeta \rbr } + \cQ_{\![ \lbr \xi, \zeta \rbr , \eta\rbr } +\text{cycl.}\\
    &= [\var_\xi, \var_\zeta]\cQ_\eta -[\var_\xi, \var_\eta]\cQ_\zeta + [\var_\zeta, \var_\eta]\cQ_\xi  + \var_{\lbr \xi, \zeta \rbr } \cQ_\eta - \var_{\lbr \xi, \eta\rbr } \cQ_\zeta+ \var_{\lbr \zeta, \eta\rbr } \cQ_\xi\\
    & + \cQ_{\lbr \lbr \xi, \zeta \rbr , \eta\rbr } + \cQ_{\lbr \lbr \eta, \xi\rbr , \zeta\rbr }+ \cQ_{\lbr \lbr \zeta, \eta\rbr , \xi\rbr }\\
    &=0,
\end{aligned}
\end{equation}
where we used the Jacobi identity of the bracket \eqref{mod_lie} and the commutation relation \eqref {maincom}.
Therefore, since the charge terms in \eqref{id_om} vanish, it follows that the 2-cocycle identity \eqref{id_co} is satisfied.
\end{proof}

As a consistency check, we now prove the following corollary.

\begin{corollary}
 Under a boundary Lagrangian shift preserving the vanishing of the Noetherian flux, \ie such that
\be
\delta\iota_\xi\bold{\ell}=\delta_\xi\bold{\vartheta}\,,
\la{ellvarphi}
\ee
the shifted cocycle $\cK^{L'}_{(\xi, \zeta)}$ is still a field-space constant, and the charge bracket relation \eqref{ch_al} is preserved, that is 
 \be
   \{Q'_\xi,Q'_\zeta\}_{L'}+\cK^{L'}_{(\xi,\zeta)} + Q'_{\lbr \xi,\zeta\rbr} = 
  \{Q_\xi,Q_\zeta\}_{L} +\cK^{L}_{(\xi,\zeta)} + Q_{\lbr \xi,\zeta\rbr} \,.
  \label{brack4}
 \ee
\end{corollary}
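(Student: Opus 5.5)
The plan is to reduce everything to Charge Representation Theorem 1, applied to the shifted Lagrangian $\bold{L}'=\bold{L}+d\bold{\ell}$.

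\emph{Step 1: the shifted data satisfy the hypotheses of Theorem 1.} By \eqref{mod_QF}, the shifted flux is
\be
\cF'_\xi=\cF_\xi+\int_S(\delta\iota_\xi\bold{\ell}-\delta_\xi\bold{\vartheta}) .
\ee
With $\cF_\xi=0$ and the condition \eqref{ellvarphi}, this gives $\cF'_\xi=0$. The shifted theory is again built from a Lagrangian through the unique Anderson potential \eqref{theta-shift}. It has its own Lagrangian anomaly $\bold{a}'_\xi$, obtained from $\Delta_\xi d\bold{\ell}$ added to $d\bold{a}_\xi$; since $\Delta_\xi$ commutes with $d$, this is $\bold{a}'_\xi=\bold{a}_\xi+\Delta_\xi\bold{\ell}$ up to an exact term. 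The fundamental canonical relation \eqref{ham_ch} is preserved by the shift. Hence the Poisson bracket $\{\cdot,\cdot\}_{L'}$ is well defined as in \eqref{Poi}, with $\Omega'=\delta\int_\Sigma\bold{\theta}'$, and Theorem 1 applies verbatim with $L\to L'$.

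\emph{Step 2: conclusions.} Theorem 1 for $L'$ gives
\be
\{\cQ'_\xi,\cQ'_\zeta\}_{L'}\heq-\cQ'_{\lbr\xi,\zeta\rbr}-\cK^{L'}_{(\xi,\zeta)} .
\ee
The same proof shows that $\cK^{L'}$ is a field-space constant: it used only $\cF'=0$, $\delta^2=0$ and \eqref{dxiI}. Therefore both sides of \eqref{brack4} vanish on-shell, and the identity holds trivially. As an optional further check, I would compute $\cK^{L'}-\cK^{L}$ explicitly by inserting $\bold{L}'$, $\bold{a}'$ and the invariant $\bold{c}$ (from the Lemma) into \eqref{cocy}. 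This should give a coboundary-type expression built from $\int_S\iota_\xi\iota_\zeta d\bold{\ell}$, $\int_S\iota_{[\xi}\Delta_{\zeta]}\bold{\ell}$ and $\int_S\iota_\xi\iota_\zeta\bold{\ell}$-type terms. Comparing with $\delta_\xi(\cQ'_\zeta-\cQ_\zeta)+(\cQ'-\cQ)_{\lbr\xi,\zeta\rbr}$ via \eqref{mod_QF} would make the cancellation manifest.

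\emph{Main obstacle.} The delicate point is Step 1: justifying that the anomaly and the corner cocycle transform consistently under the shift. Concretely, one needs $\Delta_\xi\bold{a}'_\zeta-\Delta_\zeta\bold{a}'_\xi+\bold{a}'_{\lbr\xi,\zeta\rbr}=d\bold{c}'$ with the same $\bold{c}$ up to exact terms, which is the shift-invariance asserted in the Lemma. I would rely on the Lemma for this. If needed, I would verify it using $[\Delta_\xi,\Delta_\zeta]=-\Delta_{\lbr\xi,\zeta\rbr}$ acting on $\bold{\ell}$, which makes the extra terms cancel identically.
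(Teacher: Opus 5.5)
Your argument is correct, but it takes a different route from the paper. You note that $\cF_\xi=0$, together with \eqref{ellvarphi} and \eqref{mod_QF}, gives $\cF'_\xi=0$. Charge Representation Theorem~1 then applies verbatim to $\bold{L}'=\bold{L}+d\bold{\ell}$, with $\bold{\theta}'$ from \eqref{theta-shift}, $\bold{a}'_\xi=\bold{a}_\xi+\Delta_\xi\bold{\ell}$, and a $\bold{c}'$ supplied by the Lemma. Constancy of $\cK^{L'}$ is then immediate, and \eqref{brack4} reduces on-shell to $0=0$.

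The paper never invokes Theorem~1 for $\bold{L}'$; it treats the corollary as a consistency check of the shift rules.
\begin{itemize}
\item For constancy, it starts from \eqref{mod_K} and rewrites the extra terms via \eqref{rock} as $\int_S(\delta_\zeta\iota_\xi\bold{\ell}-\delta_\xi\iota_\zeta\bold{\ell}+\iota_{\lbr\zeta,\xi\rbr}\bold{\ell})$. Using \eqref{ellvarphi} and \eqref{maincom}, together with $\delta\cK^L=0$ from Theorem~1 for $\bold{L}$, it shows their $\delta$-variation vanishes.
\item For the bracket, it inserts \eqref{mod_QF} and uses \eqref{ellvarphi}, $[\delta_\xi,\iota_\eta]=\iota_{\delta_\xi\eta}$ and \eqref{idL}. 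All extra terms then collapse to $\int_S(I_{\lbr\xi,\eta\rbr}\bold{\vartheta}-\iota_{\lbr\xi,\eta\rbr}\bold{\ell})=-(Q'_{\lbr\xi,\eta\rbr}-Q_{\lbr\xi,\eta\rbr})$.
\end{itemize}

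Your route is shorter and makes plain that the corollary is logically a special case of Theorem~1. The paper's route instead gives an independent check that \eqref{mod_QF}, \eqref{mod_K} and the shift-invariance of $\bold{c}$ are mutually consistent. It exhibits the combination $\{Q_\xi,Q_\zeta\}_L+\cK^L_{(\xi,\zeta)}+Q_{\lbr\xi,\zeta\rbr}$ as shift-invariant term by term, rather than only showing that both sides vanish.

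Your ``optional further check'' is essentially the paper's proof. One slip there: $\int_S\iota_\xi\iota_\zeta\bold{\ell}$ is not a meaningful corner integral, since $\iota_\xi\iota_\zeta\bold{\ell}$ has degree $d-3$. What actually appears is the exact term $d(\iota_\xi\iota_\zeta\bold{\ell})$, which integrates to zero.

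Finally, on your route the shift-invariance of $\bold{c}$ is not actually needed; existence of some $\bold{c}'$ for $\bold{L}'$ suffices. Your computation $[\Delta_\xi,\Delta_\zeta]\bold{\ell}=-\Delta_{\lbr\xi,\zeta\rbr}\bold{\ell}$ is nonetheless correct. It is what underlies using the same $\bold{c}$ in \eqref{mod_K}.
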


\begin{proof}
In order to prove the first part of the corollary, let us first recall that the general expression of the shifted cocycle is given by
\begin{equation}
\cK^{L'}_{(\xi, \zeta)}= \cK^L_{(\xi,\zeta)} + \int_S \iota_\xi\iota_\zeta d\bold{\ell} 
+ \int_S (\iota_\xi\Delta_\zeta \bold{\ell}-\iota_\zeta \Delta_\xi\bold{\ell})\,.
\label{mod_K}
\end{equation}
By means of the very useful identity
 \cite{Freidel:2021cjp}
\be
\d_\zeta \iota_\xi  - \d_\xi \iota_\zeta  + \iota_{\lbr\zeta, \xi\rbr} 
&=d \iota_\zeta  \iota_\xi
-\iota_\zeta  \iota _\xi d
+\iota_\xi\Delta _\zeta 
-  \iota_\zeta \Delta_\xi
+\iota_\xi I_{\d_\zeta}-\iota_\zeta I_{\d_\xi}\,,
\la{rock}
\ee
and a repeated use of  \eqref{ellvarphi}, we can show that
\be
\d \cK^{L'}_{(\xi, \zeta)}&=\d \int_S\left( \iota_\xi\iota_\zeta d\bold{\ell} 
+\iota_\xi\Delta_\zeta \bold{\ell}-\iota_\zeta \Delta_\xi\bold{\ell} \right)
\cr
&=\d \int_S\left(\d_\zeta \iota_\xi \bold{\ell}   - \d_\xi \iota_\zeta  \bold{\ell} + \iota_{\lbr\zeta, \xi\rbr} \bold{\ell} \right)
\cr
&= \int_S\left(\d_\zeta \delta_\xi\bold{\vartheta}  - \d_\xi \delta_\zeta\bold{\vartheta}+ \d_{\lbr\zeta, \xi\rbr} \bold{\vartheta} \right)
\cr
&=0\,,
\ee
where we used  \eqref{maincom} in the last passage.  For the second part, we use the charge transformation \eqref{mod_QF}, together with $
\bold{a}'_\xi = \bold{a}_\xi + \Delta_\xi \bold{\ell}$ and the fact that  the  codimension-2 form $c_{(\xi,\eta)}$ is independent of the choice of boundary Lagrangian, to write
\be
 \{Q'_\xi,Q'_\eta\}_{L'}+\cK^{L'}_{(\xi,\eta)}
  &=  \{Q_\xi,Q_\eta\}_L +\cK^{L}_{(\xi,\eta)}+  \int_S I_{\lbr \xi,\eta\rbr}\bold{\vartheta }
 \cr
&+ 
  \int_S (\delta_\xi  \iota_\eta\bold{\ell }-I_\eta  \delta_\xi  \bold{\vartheta } ) 
 + \int_S \iota_\xi \iota_\eta d \bold{\ell }
+ \int_S \left( \iota_\xi \Delta_{\eta} \bold{\ell } -\iota_{\eta} \Delta_\xi  \bold{\ell } \right)\,.
\ee
We now use \eqref{ellvarphi} and the commutator $[\d_\xi, \iota_{\eta}]=\iota_{\d_\xi \eta}$ to arrive at
 \be
  \{Q'_\xi,Q'_\eta\}_{L'}+\cK^{L'}_{(\xi,\eta)}
 &=  \{Q_\xi,Q_\eta\}_{L} +\cK^{L}_{(\xi,\eta)}+  \int_S I_{\lbr \xi,\eta\rbr}\bold{\vartheta } 
 \cr
&+ \int_S \left( \iota_{\eta} \cL_\xi  \bold{\ell } - \iota_\xi \cL_{\eta} \bold{\ell }
+\iota_{\d_\xi \eta}\bold{\ell }-\iota_{\d_\eta \xi } \bold{\ell }
 +\iota_\xi \iota_\eta d \bold{\ell }
\right)\,.
  \ee
 Application of the identity \eqref{rock}   then yields
 \be
\iota_{\zeta} \cL_\xi  \bold{\ell } -\iota_\xi \cL_{\zeta} \bold{\ell } +
\iota_{\delta_\xi \zeta}\bold{\ell } -\iota_{\delta_\zeta \xi}\bold{\ell } =
-\iota_{\lbr \xi,\zeta\rbr}\bold{\ell } -\iota_\xi \iota_\zeta   d \bold{\ell }
+  d (\iota_\xi \iota_\zeta \bold{\ell })\,,
\la{idL}
\ee
from which
 \be
  \{Q'_\xi,Q'_\eta\}_{L'}+\cK^{L'}_{(\xi,\eta)}=
   \{Q_\xi,Q_\eta\}_{L} +\cK^{L}_{(\xi,\eta)}
   +  \int_S\left( I_{\lbr \xi,\eta\rbr}\bold{\vartheta } -\iota_{\lbr \xi,\eta\rbr}\bold{\ell } \right)\,,
  \ee
and finally 
 \be\label{brack4}
  \{Q'_\xi,Q'_\eta\}_{L'}+\cK^{L'}_{(\xi,\eta)} + Q'_{\lbr \xi,\eta\rbr} = 
  \{Q_\xi,Q_\eta\}_{L} +\cK^{L}_{(\xi,\eta)} + Q_{\lbr \xi,\eta\rbr} \,.
 \ee
\end{proof}

\subsection{Integrable Noetherian flux}\la{sec:F=d}

As the next level of generality, we want to consider the case where the Noetherian flux is given by a total field-variation of some field-space 0-form, say 
\be
\cF_\xi = \delta \cP_\xi\,.
\la{intF}
\ee
 If this condition holds, then the flux contribution can be absorbed in the definition of the charge, \ie
\begin{equation}
-I_\xi\Omega \heq \delta (\cQ_\xi -\cP_\xi). 
\label{ham_ch2}
\end{equation}
We can thus define the {\it Hamiltonian charge}  as
\be
\cH_\xi := \cQ_\xi -\cP_\xi\,,
\label{Hdef}
\ee 
with Poisson bracket given by
\begin{equation}
    \{\cH_\xi, \cH_\zeta\}_L = -I_\xi I_\zeta\Omega = \var_\xi\cH_\zeta\,.
    \label{Poi1}
\end{equation}
In this case where  no non-integrable terms appear in \eqref{ham_ch2},  the previous theorem generalizes as follows:
\begin{theorem}[\textbf{Charge representation theorem 2}]\label{threpr2}
  For Noether charges with integrable Noetherian flux \eqref{intF}, the algebra given by the Poisson bracket in \eqref{Poi1} is isomorphic to the Lie algebra of diffeomorphisms \eqref{mod_lie} up to a central extension, 
    \begin{equation}
     \{\cH_\xi, \cH_\zeta\}_L \heq -\cH_{ \lbr \xi, \zeta \rbr  } - \alpha^L_{(\xi, \zeta)}\,,
     \label{rep2}
    \end{equation}
    where $\alpha$ is a field-independent central extension given by
    \begin{equation}
   \alpha^L_{(\xi,\zeta)}   = \cK^L_{(\xi,\zeta)} -\left(  \var_\zeta\cP_\xi - \var_\xi \cP_\zeta -\cP_{\lbr \xi,\zeta\rbr } \right)\,,
\label{alpha}
\end{equation}
with $\cK^L_{(\xi,\zeta)}$ in \eqref{cocy}, 
    and satisfying  the 2-cocycle identity
    \begin{equation}
        \var_\xi \alpha^L_{(\zeta, \eta)}+ \alpha^L_{(\lbr \xi, \zeta \rbr  , \eta)} + \text{cycl.} =0. \label{id_co-2}
    \end{equation}
\end{theorem}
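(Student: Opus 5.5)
The plan is to follow the proof of Charge representation theorem 1 step by step, replacing $\cQ_\xi$ by $\cH_\xi=\cQ_\xi-\cP_\xi$. The only change is that the flux $\cF_\xi=\delta\cP_\xi$ no longer vanishes.

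\emph{Step 1: the bracket.} I start from identity \eqref{14}, which was derived from \eqref{charge_exp} alone and does not use $\cF_\xi=0$:
\[
\var_\xi \cQ_\zeta \heq -\cQ_{\lbr \xi,\zeta\rbr } -\cK^L_{(\xi, \zeta)} + I_\zeta I_\xi \Omega + \var_\zeta \cQ_\xi .
\]
I then insert $I_\zeta I_\xi\Omega = -I_\zeta(\delta\cQ_\xi-\delta\cP_\xi) = -\var_\zeta\cQ_\xi+\var_\zeta\cP_\xi$, which follows from \eqref{ham_ch} with \eqref{intF}. This gives
\[
\var_\xi\cQ_\zeta \heq -\cQ_{\lbr\xi,\zeta\rbr}-\cK^L_{(\xi,\zeta)}+\var_\zeta\cP_\xi .
\]
Subtracting $\var_\xi\cP_\zeta$ from both sides and rewriting $\cQ_{\lbr\xi,\zeta\rbr}=\cH_{\lbr\xi,\zeta\rbr}+\cP_{\lbr\xi,\zeta\rbr}$, I obtain
\[
\var_\xi\cH_\zeta \heq -\cH_{\lbr\xi,\zeta\rbr}-\cK^L_{(\xi,\zeta)}+\var_\zeta\cP_\xi-\var_\xi\cP_\zeta-\cP_{\lbr\xi,\zeta\rbr} .
\]
This is exactly \eqref{rep2} with $\alpha^L$ as in \eqref{alpha}, once I use \eqref{Poi1} for the left-hand side.

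\emph{Step 2: field independence.} I repeat the earlier chain of identities with $\cH$ in place of $\cQ$. First, $\delta\alpha^L_{(\xi,\zeta)} \heq -\delta\var_\xi\cH_\zeta-\delta\cH_{\lbr\xi,\zeta\rbr}$. This equals $(\var_\xi I_\zeta+I_{\lbr\xi,\zeta\rbr})\Omega$, because $\delta\cH_\eta=-I_\eta\Omega$ holds by \eqref{ham_ch2} and $\delta$ commutes with $\var_\xi$. Using \eqref{dxiI}, this reduces to $I_\zeta\var_\xi\Omega=I_\zeta\delta I_\xi\Omega=-I_\zeta\delta^2\cH_\xi=0$. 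The point to check here is that $\Omega$ is closed, so that $\var_\xi\Omega=\delta I_\xi\Omega$. Note that this proves $\alpha^L$ is constant without needing $\cK^L$ or the $\cP$-combination to be separately constant; indeed they generally are not.

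\emph{Step 3: cocycle identity.} Identity \eqref{id_om} depends only on the closure of $\Omega$, so it still holds. By Step 1 I have
\[
\Omega(\var_\zeta,\var_\eta)\heq\var_\zeta\cH_\eta-\var_\eta\cH_\zeta+\cH_{\lbr\zeta,\eta\rbr}+\alpha^L_{(\zeta,\eta)} ,
\]
which has the same form as \eqref{OQ} with $(\cQ,\cK^L)\to(\cH,\alpha^L)$. The cancellation of the charge terms is the same computation as in the proof of theorem 1. It relies only on \eqref{maincom} and on the Jacobi identity for $\lbr\cdot,\cdot\rbr$, both of which hold for any family $\cH_\xi$ that is linear in $\xi$. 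What remains is \eqref{id_co-2}.

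The main obstacle is bookkeeping rather than anything conceptual. One must make sure that \eqref{14} is genuinely valid off the $\cF=0$ assumption, i.e. that it is derived from \eqref{charge_exp} alone. One must also check that $\cP_\xi$ is linear in $\xi$, so that $\cH_{\lbr\cdot,\cdot\rbr}$ obeys the same manipulations. If needed, I would verify the first point directly from \eqref{charge_exp} and \eqref{flux}.
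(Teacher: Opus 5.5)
Your proposal is correct and is essentially the paper's proof. The paper also starts from the generic-flux identity $-I_\xi I_\zeta\Omega \heq \delta_\xi\cQ_\zeta-\delta_\zeta\cQ_\xi+\cQ_{\lbr\xi,\zeta\rbr}+\cK^L_{(\xi,\zeta)}$, which is just \eqref{14} rearranged; the paper states explicitly, citing (3.22) of Freidel et al., that it holds for an arbitrary Noetherian flux, so your bookkeeping concern is settled. From there it rewrites everything in terms of $\cH_\xi=\cQ_\xi-\cP_\xi$, proves $\delta\alpha^L=0$ with the same $[\delta_\xi,I_\zeta]=-I_{\lbr\xi,\zeta\rbr}$, $\delta\Omega=0$, $\delta^2=0$ manipulation, and obtains the cocycle identity by substituting $\cQ\to\cH$ in the proof of theorem 1. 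The only difference is cosmetic: the paper inserts $\delta_\xi\cH_\zeta=-I_\xi I_\zeta\Omega$ on the left-hand side, whereas you substitute $I_\zeta I_\xi\Omega$ via the canonical relation, and the two are equivalent.
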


\begin{proof} 
{
To streamline the proof of the  theorem, it is convenient to start from the identity 
\be
 -I_{\xi}I_{\zeta}\Omega \heq \delta_{\xi}\mathcal{Q}_\zeta-\delta_\zeta\mathcal{Q}_\xi+\mathcal{Q}_{\lbr \xi,\zeta\rbr }+\cK^L_{(\xi,\zeta)}
\ee
derived in \cite{Freidel:2021cjp} (see (3.22) there) and valid for a generic form of the Noetherian flux. For an integrable flux, it then follows that
\begin{equation}
\begin{aligned}
\delta_\xi\mathcal{H}_\zeta&\heq\delta_{\xi}\mathcal{H}_\zeta-\delta_\zeta\mathcal{H}_\xi+\mathcal{H}_{\lbr \xi,\zeta\rbr }+\cK^L_{(\xi,\zeta)}
+\delta_\xi\mathcal{P}_\zeta-\delta_\zeta\mathcal{P}_\xi+\mathcal{P}_{\lbr \xi,\zeta\rbr }.
\end{aligned}
\end{equation}
Hence, one immediately arrives at
\begin{equation}
    \{\mathcal{H_\xi,\mathcal{H}_\zeta}\}_L=\delta_{\xi}\mathcal{H}_\zeta\heq-\mathcal{H}_{\lbr \xi,\zeta\rbr }-\alpha^L_{(\xi,\zeta)},
\end{equation} 
with 
\be
\alpha^L_{(\xi,\zeta)}=
\cK^L_{(\xi,\zeta)}+
\delta_\xi\mathcal{P}_\zeta-\delta_\zeta\mathcal{P}_\xi+\mathcal{P}_{\lbr \xi,\zeta\rbr }.
\la{al}
\ee
To prove that \eqref{al} is a field space constant, we compute
\be
\delta \alpha^L_{(\xi,\zeta)}&=
-\d \cH_{\lbr \xi,\zeta\rbr }
-\d  \var_\xi \cH_\zeta
\cr
 &=I_{\lbr \xi,\zeta\rbr }\Omega
 +\d I_\xi I_\zeta \Omega
 \cr
 &=(I_\zeta\var_\xi -\var_\xi I_\zeta)\Omega
 +\d I_\xi I_\zeta \Omega
 \cr
 &=(I_\zeta \d I_\xi 
- I_\xi \d I_\zeta)\Omega
\cr
 &=0,
\ee
where we used \eqref{dxiI} and $\d^2=0$.
The proof of the 2-cocycle identity follows straightforwardly by substituting $\cQ\to\cH$ in the proof of the previous theorem. 
}
\end{proof}

 From the general relations \eqref{mod_QF}, under a boundary Lagrangian shift that preserves the integrability condition \eqref{intF}, the Noether charge and integrable flux change accordingly to
\begin{equation}
\cQ_\xi' = \cQ_\xi + \int_S (\iota_\xi\bold{\ell} -I_\xi \bold{\vartheta})\,,
\qquad
\cP_\xi' = \cP_\xi +\int_S (\iota_\xi\bold{\ell}-I_\xi\bold{\vartheta} - \bold{\upsilon}_\xi)\,,
\label{mod_QP}
 \end{equation}
 where $\bold{\upsilon}_\xi$ is a field-space 0-form such that
 \be
 I_\xi \d \bold{\vartheta} =\d\bold{\upsilon}_\xi\,.
 \la{ups}
 \ee
The condition above represents the least stringent implementation of the requirement for the   boundary Lagrangian to preserve \eqref{intF}. We thus immediately see that the shifted Hamiltonian charge reads
\be
\cH_\xi' = \cH_\xi+\int_S \bold{\upsilon}_\xi\,.
\la{H'}
\ee
Let us now prove the useful identity
\be
\bold{\upsilon}_{ \lbr \xi, \zeta \rbr  }=-  I_\xi I_\zeta \d \bold{\vartheta}\,.
\la{upxizeta}
\ee
\begin{proof}
{
From  \eqref{mod_QF}, we have
\be
\d\cP'_{ \lbr \xi, \zeta \rbr }  = \d\cP_{ \lbr \xi, \zeta \rbr }  +\int_S (\delta\iota_{ \lbr \xi, \zeta \rbr } \bold{\ell}-\delta_{ \lbr \xi, \zeta \rbr } \bold{\vartheta})\,,
\ee
where
\be
-\delta_{\lbr \xi,\zeta\rbr }\bold{\vartheta}&=[\delta_\xi, \delta_\zeta]\bold{\vartheta}
\cr
&=\delta I_\xi \delta_\zeta \bold{\vartheta}+I_\xi\d^2 \bold{\upsilon}_\zeta -\xi\leftrightarrow\zeta
\cr
&=\delta I_\xi \delta_\zeta \bold{\vartheta} -\xi\leftrightarrow\zeta\,.
\ee
Hence, we can write
\be
\cP'_{\lbr \xi,\zeta\rbr }&=
 \cP_{\lbr \xi,\zeta\rbr }
+\int_S (\iota_{\lbr \xi,\zeta\rbr }\bold{\ell}
+I_\xi\delta_\zeta\bold{\vartheta}-I_\zeta\delta_\xi\bold{\vartheta})\,.
\ee
At the same time, \eqref{mod_QP} yields
\be
\cP'_{\lbr \xi,\zeta\rbr } = \cP_{\lbr \xi,\zeta\rbr } +\int_S (\iota_{\lbr \xi,\zeta\rbr }\bold{\ell}-I_{\lbr \xi,\zeta\rbr }\bold{\vartheta} - \bold{\upsilon}_{\lbr \xi,\zeta\rbr })\,.
\ee
Combining the two expressions for the shifted integrable flux, we have
\be
\bold{\upsilon}_{\lbr \xi,\zeta\rbr }&=I_\zeta\delta_\xi\bold{\vartheta} -I_\xi\delta_\zeta\bold{\vartheta}-I_{\lbr \xi,\zeta\rbr }\bold{\vartheta} 
\cr
&=I_\zeta\delta_\xi\bold{\vartheta} -I_\xi\delta_\zeta\bold{\vartheta}+[\d_\xi,I_\zeta]\bold{\vartheta} 
\cr
&=\d_\xi I_\zeta\bold{\vartheta} -I_\xi\delta_\zeta\bold{\vartheta}
\cr
&=-I_\xi I_\zeta\d\bold{\vartheta}\,,
\ee
where we used the commutator \eqref{dxiI}.
}
\end{proof}

The identity \eqref{upxizeta} can be used to prove in a straightforward manner  the following corollary.

\begin{corollary}
Under a Lagrangian shift that preserves the integrability condition \eqref{intF}, the bracket of the Hamiltonian charge \eqref{Hdef} satisfies
\be
    \{\cH'_\xi, \cH'_\zeta\}_L +\cH'_{ \lbr \xi, \zeta \rbr  }
    &=    \{\cH_\xi, \cH_\zeta\}_L +\cH_{ \lbr \xi, \zeta \rbr  }\,,
\la{Hbrashift}
\ee
and  the central extension $ \alpha_{(\xi,\zeta)}$ is left invariant \footnote{Due to this shift-invariance of the central extension, from now on we drop the superscript $L$ for $\alpha$.} 
\be
\alpha'_{(\xi, \zeta)}= \alpha_{(\xi,\zeta)}\,.
 \label{mod_a}
\ee
\end{corollary}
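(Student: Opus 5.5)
The plan is to work directly with the shifted Hamiltonian charge \eqref{H'}, $\cH'_\xi=\cH_\xi+\int_S\bold{\upsilon}_\xi$. Before anything else, I will fix which symplectic structure the primed bracket uses. By \eqref{theta-shift}, the shifted potential differs from $\bold{\theta}$ by $\delta\bold{\ell}-d\bold{\vartheta}$, so $\Omega'=\Omega-\int_S\delta\bold{\vartheta}$. Contracting with $I_\xi$ and using \eqref{ups}, I expect $-I_\xi\Omega'=\delta\cH_\xi+\int_S I_\xi\delta\bold{\vartheta}=\delta\cH'_\xi$, consistent with \eqref{ham_ch2} for the primed data. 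The primed bracket is therefore $\{\cH'_\xi,\cH'_\zeta\}=-I_\xi I_\zeta\Omega'=\var_\xi\cH'_\zeta$. I will read the subscript $L$ in \eqref{Hbrashift} as the bracket naturally associated with each Lagrangian, so that each bracket is the corresponding field-space derivative $\var_\xi$ of the charge.

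The first step is to compute the left-hand side of \eqref{Hbrashift}:
\be
\var_\xi\cH'_\zeta+\cH'_{\lbr\xi,\zeta\rbr}=\var_\xi\cH_\zeta+\cH_{\lbr\xi,\zeta\rbr}+\int_S\left(\var_\xi\bold{\upsilon}_\zeta+\bold{\upsilon}_{\lbr\xi,\zeta\rbr}\right).
\ee
Next I simplify the two extra terms. Since $\bold{\upsilon}_\zeta$ is a field-space 0-form, $\var_\xi\bold{\upsilon}_\zeta=I_\xi\delta\bold{\upsilon}_\zeta=I_\xi I_\zeta\delta\bold{\vartheta}$ by \eqref{ups}. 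The identity \eqref{upxizeta} gives $\bold{\upsilon}_{\lbr\xi,\zeta\rbr}=-I_\xi I_\zeta\delta\bold{\vartheta}$. The two terms therefore cancel identically, which proves \eqref{Hbrashift}.

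For \eqref{mod_a}, I combine \eqref{Hbrashift} with Theorem~2 applied to both $L$ and $L'$. Theorem~2 gives $\alpha'_{(\xi,\zeta)}=-\{\cH'_\xi,\cH'_\zeta\}-\cH'_{\lbr\xi,\zeta\rbr}$ and the analogous relation for the unprimed quantities. The right-hand sides agree, so $\alpha'=\alpha$.

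For consistency, I would also check \eqref{mod_a} directly from \eqref{alpha}. This uses \eqref{mod_K} for $\cK^{L'}$ and \eqref{mod_QP} for $\cP'$, and applies \eqref{rock} to the $\bold{\ell}$ terms exactly as in the proof of Corollary~1. The $\bold{\ell}$ contributions should cancel between $\cK'-\cK$ and the $\cP$-combination. The remaining $\bold{\vartheta}$ and $\bold{\upsilon}$ terms then collapse via \eqref{dxiI} and \eqref{upxizeta}.

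The main obstacle I anticipate is making the choice of bracket for the primed charges precise. If $\{\cdot,\cdot\}_L$ were kept literally with the unshifted $\Omega$, the primed bracket would not equal $\var_\xi\cH'_\zeta$, and the clean cancellation would fail. Most of the care will therefore go into the first step, where I verify that $\Omega'$ realizes $\cH'$ as the Hamiltonian. After that, the argument is a two-line cancellation.
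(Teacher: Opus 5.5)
Your argument is correct and is essentially the paper's proof. You write $\cH'_\xi=\cH_\xi+\int_S\bold{\upsilon}_\xi$, cancel $\delta_\xi\bold{\upsilon}_\zeta=I_\xi I_\zeta\delta\bold{\vartheta}$ against $\bold{\upsilon}_{\lbr\xi,\zeta\rbr}=-I_\xi I_\zeta\delta\bold{\vartheta}$ using \eqref{ups} and \eqref{upxizeta}, and read off $\alpha'=\alpha$ from \eqref{rep2}, exactly as the paper does. Your preliminary check that $\Omega'=\Omega-\int_S\delta\bold{\vartheta}$ makes $\cH'_\xi$ Hamiltonian, so that the primed bracket equals $\delta_\xi\cH'_\zeta$, is only implicit in the paper (which even keeps the subscript $L$ on the primed bracket), and it is a worthwhile clarification.
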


\begin{proof}
By means of \eqref{H'} and \eqref{upxizeta} we have
\be
    \{\cH'_\xi, \cH'_\zeta\}_L +\cH'_{ \lbr \xi, \zeta \rbr  }&=    \{\cH_\xi, \cH_\zeta\}_L +\cH_{ \lbr \xi, \zeta \rbr  }
    +\int_S\left( \var_\xi  \bold{\upsilon}_\zeta + \bold{\upsilon}_{ \lbr \xi, \zeta \rbr  } \right)
    \cr
    &=    \{\cH_\xi, \cH_\zeta\}_L +\cH_{ \lbr \xi, \zeta \rbr  }
    +\int_S\left(  I_\xi \var \bold{\upsilon}_\zeta 
- I_\xi I_\zeta \d \bold{\vartheta} \right)
    \cr
    &=    \{\cH_\xi, \cH_\zeta\}_L +\cH_{ \lbr \xi, \zeta \rbr  }\,.
\ee
Then \eqref{mod_a} follows from  \eqref{rep2}.
\end{proof}

Let us stress some key features of the algorithm developed above to construct Hamiltonian charges in comparison with other approaches commonly followed in the literature. Once the  pre-symplectic potential is {\it uniquely} derived from a given Lagrangian---via the Anderson's homotopy operator prescription---and the {\it Noetherian split} is implemented, there is no ambiguity left in the determination of the charge and the flux, as long as {\it  all the anomaly contributions} are included in their definitions, as expressed in \eqref{charge_exp}, \eqref{flux}.  This is particularly clear both in the case of a vanishing and an integrable Noetherian flux, as enunciated in the two Charge representation theorems proven above. 

In fact, the content of these two theorems determines a precise algorithm that allows us to compute  the value of the central charge from first principles and with no degree of ambiguity. Such algorithm comprises the following steps:

\begin{enumerate}

\item Once the pre-symplectic potential has been uniquely determined from the given initial Lagrangian, compute the  symplectic form contraction \eqref{ham_ch}.

\item Any time step 1  gives rise to a total variation of a field-space 0-form, this is the quantity that provides a  canonical representation of the vector field algebra. In this case, there are clearly only two options for the Noetherian flux expression: Either it vanishes or it is a  $\d$-exact form. Hence, step 2 consists of computing  the  Noetherian flux  \eqref{flux} to verify which is the case.

\item If the result of  step 2 gives a vanishing flux, the central charge extension is explicitly given by \eqref{cocy}; if the Noetherian flux satisfies \eqref{intF} instead,  the central charge extension is explicitly given by \eqref{alpha}. In either case, one can perform the final step 3, which consists of using the appropriate explicit expression to derive the value of the central charge, {{\it without} the need of a direct calculation of the charge algebra}.

\end{enumerate}

Notice that in the procedure laid down above there is no mention of the Noether charge. Clearly, in step 1, one can instead use  the expression \eqref{charge_exp}, which requires the derivation of Lagrangian anomaly as well,\footnote{Its derivation is in any case required to complete steps 2 and 3.} to derive the Noether charge---this is indeed how we will proceed in Section \ref{alads}. It is then step 2 that informs us whether a Hamiltonian charge exists or not for the theory under considerations. 
Moreover, let us stress once more that the procedure leaves 
no space for any addition of a total variation term in what we call `charge' and what we call `flux', which would follow from  the freedom of adding a $\d$-exact term in the pre-symplectic potential. Such an ambiguity can only be advocated if one omits to include all the anomaly contributions in the definitions \eqref{charge_exp}, \eqref{flux}. Any $\d$-exact term addition to the original pre-symplectic potential can only come from a boundary Lagrangian shift according to
\eqref{theta-shift}, but this is part of the specification of the `given initial Lagrangian', and it can be regarded as step 0, in the spirit of \cite{Freidel:2020xyx,Freidel:2020svx}. In this regard, notice that the property \eqref{mod_a} enunciated in the corollary above, guarantees that, in the case of an integrable flux, we do not need to worry of which representative in the equivalence class $[\bold L]$, where  $\bold L \sim \bold L+d\bold \ell$ {and such that \eqref{ups} is satisfied}, we take to compute the central charge, as this is invariant under any of these boundary Lagrangian shifts; this can turn out to be a very useful property.

This concludes the first part of this work, where we implemented the Noetherian split within the covariant phase space formalism in the presence of field-dependence and anomalies to derive explicit  expressions of the central charge that can arise in the representation of the theory symmetry algebra. We specialized to the cases where the Noetherian flux is either vanishing or integrable. We now apply the procedure described above to a case study where these conditions are met, namely 
to the context  of 3D asymptotically anti-de Sitter spacetimes. We will show how the general form \eqref{alpha} of the central charge exactly reproduces the expression found in previous literature. Before moving to this next step, let us conclude this section with a final remark.

\begin{remark}
When the Noetherian flux \eqref{flux} does not fall in either of the two cases considered in the  Charge representation theorems above, namely when it is a non-vanishing, non-integrable field-space 1-form, one has to obviate to the absence of a Poisson bracket to compute the charge algebra. In this case then, it is now common practice to rely on the
Barnich--Troessaert bracket \cite{Barnich:2010eb, Barnich:2011mi,Barnich:2009se}, defined as
\be
  \{\cQ_\xi,\cQ_\zeta\}_{BT}:= \delta_\xi \cQ_\zeta - I_\zeta \cF_\xi\,.
\label{IBra}
\ee
The definition above can be used for any prescription one uses to derive the charge and the flux, like the Iyer--Wald--Zoupas \cite{Lee:1990nz,Iyer:1994ys,Wald:1999wa} prescription or the  Barnich--Brandt \cite{Barnich:2001jy, Barnich:2003xg, Barnich:2007bf} prescription. However, in none of these cases one can derive an explicit expression for the possible cocycles that can appear in the symmetry algebra representation.

In \cite{Freidel:2021cjp}, it was shown that applying the Barnich--Troessaert bracket to the Noether charge and the Noetherian flux defined as in \eqref{charge_exp}, \eqref{flux} yields a cocycle expression exactly of the form \eqref{cocy}. This result was used to define a new bracket where the term  \eqref{cocy} is included in the expression of the bracket. The advantage of such modification was that the Jacobi identity for the new bracket could easily be proven, while in the case of \eqref{IBra} certain assumptions (see \cite{Barnich:2011mi}) on the brackets involving the a priori field-dependent cocycle \eqref{cocy} are required to guarantee the validity of Jacobi. In fact, the definition \eqref{IBra} only applies to the bracket between two charges, but it doesn't provide a prescription to compute the charge action on a general field-dependent quantity.

However, the content of the Charge representation theorem 1 explicitly shows that the cocycle  \eqref{cocy} appears on the RHS of \eqref{ch_al} {\it already} in the case of a vanishing flux---in which case it becomes a field-independent quantity and the Jacobi identity is automatically satisfied---, where there is no ambiguity in the definition of the charge bracket, as a Poisson bracket is available.  This shows that, in the presence of  a non-vanishing, non-integrable flux, the Noether charge bracket can be computed using the Barnich--Troessaert bracket \eqref{IBra}, with the flux given by  \eqref{flux}. One would then obtain the Charge representation theorem
\be
  \{\cQ_\xi, \cQ_\zeta\}_{BT} \heq -\cQ_{\lbr \xi,\zeta \rbr } - \cK^L_{(\xi, \zeta)}\,,
\ee
with $\cK^L_{(\xi, \zeta)}$  given again by \eqref{cocy}. Validity of the Jacobi identity still rests on the assumptions mentioned above.

{
Let us also point out that a cocycle shift of the form \eqref{alpha} 
was already derived in
\cite{Barnich:2011mi} when shifting the flux by an integrable term (with a compensating shift in the charge). The main difference with our approach is that the {\it reference} cocycle $\cK^L$ is explicitly given by  \eqref{cocy}, while in \cite{Barnich:2011mi} this can be determined only {\it after} the bracket \eqref{IBra} is computed.
}

\end{remark}

\section{Asymptotically Locally AdS$_3$ spacetimes}\label{alads}

We start by briefly reviewing the definition of \emph{asymptotically (locally) anti-de Sitter spacetimes}, focusing on  spacetime dimension equal to three. 
Let us consider a three dimensional manifold $\cM$ endowed with a Lorentzian metric $g_{\mu\nu}$. Following the ambient space  construction introduced by Fefferman and Graham in \cite{Fefferman:2007rka}, an asymptotically anti-de Sitter spacetime can always be put in the following form
\begin{equation}
ds^2= \frac{\ell^2}{r^2} dr^2 + \frac{r^2}{\ell^2} {q}_{ab}(r,x)dx^a dx^b,
\label{ads_metr}
\end{equation}
where $r$ is the space-like radial coordinate and $x^a= (t, \phi)$ are the coordinates defining the two-dimensional cylinder $\Gamma$, with $\phi\sim \phi+2\pi$; $\ell^2  =-1/ \Lambda$ is the AdS$_3$ radius. Therefore, the asymptotic boundary of AdS$_3$ is equivalent to a timelike cylinder $\Gamma$ of coordinates $t, \phi$.

In order to find the asymptotic symmetry group of the 3D anti-de Sitter spacetime, one has to impose appropriate boundary conditions on the asymptotic behaviour of the fields. In the seminal work \cite{Brown:1986nw}, Brown and Henneaux  (BH) took the Minkowski metric as the boundary metric, \ie $q^{(0)}_{ab}=\eta_{ab}$. However, it is well-known that the Penrose conformal compactification implies a conformal class of metrics on the boundary, rather than just one specific metric. Therefore, instead of  picking  a particular representative of this equivalence class, \eg the Minkowski metric $\eta$, 
one can relax this boundary condition, assuming that the boundary metric is conformally flat \cite{Troessaert:2013fma,Alessio:2020ioh}, \ie
\begin{equation}
q_{(0)ab} dx^a dx^b= e^{2\varphi(x)} \eta_{ab}dx^a dx^b = e^{2\varphi(x)}(-dt^2 + \ell^2 d\phi^2)\,,
\label{PBH}
\end{equation}
where $\varphi$ is the so called conformal factor. The boundary condition \eqref{PBH} is sometimes referred to as Penrose--Brown--Henneaux (PBH) boundary condition. Using the Fefferman--Graham ansatz \eqref{ads_metr} (see \cite{ Skenderis:1999nb, Skenderis:2000in,Fiorucci:2020xto}) and imposing Einstein's equations in the decomposition of Riemann tensor yields 
\begin{equation}
\begin{aligned}
&\Bigl(\pa_r +\f{3}{r}\Bigl)\pa_r q_{ab} - \f{1}{2}\pa_r q_{bc}(q^{-1}\pa_r q)^c_{\ a}=0,\\
&\nabla_b \pa_r q_{ac} - \nabla_c\pa_r q_{ab}=0,\\
&\ell^4 R^d{}_{abc} + \Bigl(\f{r^3}{2}(\delta^d_{\ c}\pa_r q_{ab} +(q^{-1}\pa_r q)^d_{\ c} q_{ab}) +\f{r^4}{4}(q^{-1}\pa_r q)^d_{\ c} \pa_r q_{ab}  - (b\leftrightarrow c)\Bigl) =0,
\end{aligned}\label{eeom}
\end{equation}
where $R^d{}_{abc}$ is the Riemann tensor associated with $q_{ab}$.
These equations can be solved order by order in $r$ if we allow the metric to take the form\footnote{One should also consider a logarithmic term in the metric expansion, \ie $h_{[2]ab}r^{-2}\ln r^{-2}$. It has been shown \cite{Skenderis:1999nb, Skenderis:2000in,deHaro:2000vlm} that $h_{[d]}\propto \frac{\delta}{\delta q^{(0)}}\int a_{(d)}$, where $a_{(d)}$ is the conformal anomaly. However, for AdS$_3$ the RHS is vanishing because $\int a_{(2)}\propto \int \sqrt{q^{(0)}}R^{(0)}$ is purely topological.}
\be
q_{ab} = q^{(0)}_{ab}+ \f{\ell}{r}C_{ab}^{(1)}+\f{\ell^2}{r^2} C^{(2)}_{ab}+\cdots.
\label{qexp}
\ee
Thus, by substituting  \eqref{qexp} into \eqref{eeom} one obtains the following exact solution
\begin{equation}
\la{gab}
 {q}_{ab} (r,x)=  q_{ab}^{(0)}(x) + \frac{\ell^2}{r^2}C_{ab}^{(2)}(x)  +\frac{\ell^4}{r^4}C_{ab}^{(4)}\,,
\end{equation}
with
\begin{equation}
C^{(4)}_{ab}(x)\ \hat{=}\ \frac{1}{4} C^{(2)}_{ac}q_{(0)}^{cd}C^{(2)}_{db}, \qquad q^{ab}_{(0)}C_{ab}^{(2)}\ \hat{=}\ -\frac{\ell^2}{2} R^{(0)}, \qquad D^a C_{ab}^{(2)}\ \hat{=}\ -\frac{\ell^2}{2}\pa_b R^{(0)},
\label{ee}
\end{equation}
 and
where we have denoted by $D$ the Levi--Civita connection compatible with the boundary metric $q^{(0)}_{ab}$, and $R^{(0)}$ is the Ricci scalar associated with $q^{(0)}_{ab}$
\be
R^{(0)}= \f2{\ell^2}e^{-2\varphi}(\ell^2 \pa_t^2 - \pa_\phi^2)\varphi
= -2 \Box \varphi
\,.
\la{R0}
\ee
It follows that on-shell
\be
-\ell^2 C_{tt}^{(2)}+C_{\phi\phi}^{(2)} 
\ \hat{=}\ -\f{{\ell^{4}}}{2}e^{2\varphi}R^{(0)}
={\ell^4}e^{2\varphi}\Box \varphi\,.\label{119n}
\ee

Next, we want to find the asymptotic diffeomorphisms which preserve the Fefferman--Graham gauge \eqref{ads_metr} and the boundary conditions \eqref{PBH}. These are given by a vector field $\xi(r, x^a)$ such that
\begin{equation}
\cL_\xi g_{rr} =0, \qquad\cL_\xi g_{ra}=0, \qquad \text{and}\ \qquad \cL_{\xi} g_{ab} = \cO(r^2) \label{fall-off}.
\end{equation}
Writing explicitly the first two conditions, we have
\begin{equation}
 \xi^\alpha \pa_\alpha g_{rr} + 2g_{\alpha r}\pa_r \xi^\alpha=0, \qquad\text{and}\qquad g_{r \alpha}\pa_a \xi^\alpha + g_{\alpha a}\pa_r \xi^\alpha = 0,
\end{equation}
which yield 
\begin{equation}
 \xi^r \pa_r g_{rr} =- 2g_{r r}\pa_r \xi^r \quad \rightarrow \quad \xi^r (r,x)= r \sigma_\xi(x)\\
\end{equation}
and 
\begin{equation}
\begin{aligned}
\pa_r \xi^b= - g^{b a}g_{rr}\pa_a \xi^r \quad 
\rightarrow \quad \xi^b(r,x)= V_\xi^b(x)-\ell^2\pa_a\sigma_\xi(x) \int \frac{dr'}{r'}g^{ab}(r', x) ,
\end{aligned}
\end{equation}
respectively, and where $\sigma_\xi, V_\xi$ are arbitrary field-independent  functions; however, the components $\xi^a$  depend on the  metric field $g^{ab}$ at sub-leading orders. Therefore, the asymptotic vector fields generating the residual diffeomorphisms are
\begin{equation}
\xi^r(r,x) = r \sigma_\xi(x)\,, \qquad     \xi^a (r,x) = V_\xi^a(x) + \f{1}{r^2}\xi^a_{(2)}(x) + \f{1}{r^4}\xi^a_{(4)}(x) + \cO(r^{-6})\,,
\label{54}
\end{equation}
where
\begin{equation}
\begin{aligned}
    \xi^a_{(2)} =\f{\ell^4}{2}q^{ab}_{(0)}\pa_b \sigma_\xi, \qquad
    \xi^a_{(4)} = -\f{\ell^6}{4}q^{ca}_{(0)}C^{(2)}_{cd}q^{bd}_{(0)}\pa_b\sigma_\xi\,.
 \la{xia}
\end{aligned}
\end{equation}
We come back to the implications  of the third fall-off condition in \eqref{fall-off} in a moment.

The vectors   \eqref{54}  form a closed algebra under the bracket  \eqref{mod_lie} given by\footnote{\la{foot} Not to clutter the notation, we denote $V_1=V_{\xi_1},
\sigma_1=\sigma_{\xi_1}$.} \cite{Barnich:2010eb, Compere:2020lrt} (a proof up to $\cO(r^{-2})$ is also given in Appendix \ref{app:vec-alg})
\be
\lbr \xi_1(V_1, \sigma_1), \xi_2(V_2, \sigma_2) \rbr  = \xi(V_{12}, \sigma_{12})\,,
\label{alg_symm}
\ee
with
\begin{equation}
    {V}^t_{12}=V^a_1\pa_a V^t_2 -V^a_2\pa_a V^t_1, \qquad 
{\sigma}_{12}=V^a_1\pa_a \sigma_2 -V^a_2\pa_a \sigma_1. 
\la{V12}
\end{equation}

\subsection{Variations of the solution space}
We can study how the metric components transform under the action of the vector fields $\xi$ defined above in \eqref{54}, \eqref{xia}.
 The general transformation rule of the metric functionals $\Phi$ reads as follows
\begin{equation}
    \var_\xi \Phi = \cL_\xi \Phi + \Delta_\xi\Phi,
\end{equation}
where the last term represents the anomaly contribution. Since the spacetime metric transforms covariantly under the symmetry transformations, we have that
\begin{equation}
\delta_\xi g_{ab} = \frac{r^2}{\ell^2} \Bigl(\delta_\xi q^{(0)}_{ab} + \frac{\ell^2}{r^2} \delta_\xi C_{ab}^{(2)} + \frac{\ell^4}{r^4} \delta_\xi C_{ab}^{(4)}\Bigl)
\end{equation}
must be equal to $\cL_\xi g_{\mu\nu}$. 
Putting together the contributions of the same order implements the  third fall-off condition in \eqref{fall-off}, and  we  obtain the following transformation rules
\begin{subequations}
\begin{align}
\begin{split}
\delta_\xi q^{(0)}_{ab} &=  \cL_{V_\xi} q_{ab}^{(0)} + 2\sigma_\xi q^{(0)}_{ab},\label{69a}
\end{split}\\
\begin{split}
\delta_\xi C^{(2)}_{ab}
&=\cL_{V_\xi} C_{ab}^{(2)} +\ell^2\left(q^{(0)}_{ab}q_{(0)}^{cd}\pa_c\sigma_\xi \pa_d \varphi +  \pa_b\pa_a\sigma_\xi -2\pa_{(a}\varphi \pa_{b)}\sigma_\xi\right)
\label{66b},
\end{split}\\
\begin{split}
\delta_\xi C^{(4)}_{ab} &=  \cL_{V_\xi} C_{ab}^{(4)} -2\sigma_\xi C_{ab}^{(4)}+
\frac{\ell^2}{2}q^{cd}_{(0)}\pa_d\sigma_\xi \pa_c C^{(2)}_{ab} + \frac{\ell^2}{2}C_{ac}^{(2)}\pa_b(q^{cd}_{(0)}\pa_d\sigma_\xi)\\
&\quad + \frac{\ell^2}{2}C_{bc}^{(2)}\pa_a(q^{cd}_{(0)}\pa_d\sigma_\xi) - \frac{\ell^2}{4}q^{ce}_{(0)}C_{ef}^{(2)}q^{df}_{(0)}\pa_d\sigma_\xi \pa_c q^{(0)}_{ab} \\
&\quad- \frac{\ell^2}{4}q_{ca}^{(0)}\pa_b(q^{ce}_{(0)}C_{ef}^{(2)}q^{df}_{(0)}\pa_d\sigma_\xi) - \frac{\ell^2}{4}q_{cb}^{(0)}\pa_a(q^{ce}_{(0)}C_{ef}^{(2)}q^{df}_{(0)}\pa_d\sigma_\xi).
\end{split}
\end{align}
\label{var_func}
\end{subequations}

The main difference between BH and PBH boundary conditions is that, in the latter case, we allow for variations of $q^{(0)}_{ab}$ through variations of the conformal factor $\varphi$, \ie $\delta q^{(0)}_{ab}=2(\delta\varphi)q^{(0)}_{ab}$. Therefore,  equation \eqref{69a} becomes
\begin{equation}
D_a V_{\xi b} + D_b V_{\xi a} = -2(\omega_\xi +\sigma_\xi )q^{(0)}_{ab} \,,
 \label{Kill_eq1}
\end{equation}
where we have defined  the Weyl  parameter $\omega_\xi(x) = -\delta_\xi \varphi(x)$.  
Introducing 
 \be
 \chi_{V_\xi}(x) := \sigma_\xi(x)+\omega_\xi(x)\,,
 \la{chi}
 \ee
and  tracing  \eqref{Kill_eq1}, we have 
\begin{equation}
\chi_{V_\xi} = -\frac{1}{2} D_a V_\xi^a
= -\frac{1}{2} \pa_a V_\xi^a-V^a_\xi\pa_a \varphi
\,.
\label{56}
\end{equation}
It follows that $V_\xi^a$ are conformal Killing vectors of the boundary metric $q^{(0)}_{ab}$ and \eqref{Kill_eq1} implies the conditions 
\be
\pa_t V_\xi^t=\pa_\phi V_\xi^\phi\,,\quad \pa_\phi V_\xi^t=\ell^2 \pa_t V_\xi^\phi\,,
\la{VKill}
\ee
namely, upon introduction of light-cone coordinates $x^{\pm} = (t/\ell)\pm \phi$, $V_\xi^\pm$  are chiral vectors.

When restricting to ordinary Brown--Henneaux boundary conditions, the LHS of \eqref{69a} is zero and therefore
\begin{equation}
    \sigma_\xi = -\f{1}{2}\pa_a V_\xi^a.
\end{equation}

So far, the Weyl parameter $\omega_\xi$ has been considered as a generic function of $x^a$. However, if we introduce 
the holographic stress-energy tensor  \cite{Henningson:1998gx,Balasubramanian:1999re}
\begin{equation}
T_{ab} 
= \frac{1}{\ell} \Bigl(C^{(2)}_{ab} +\frac{\ell^2}{2}q^{(0)}_{ab}R^{(0)}\Bigl)\,,
\label{T}
\end{equation}
and require that its trace transforms homogeneously, the harmonicity condition
\be
\Box \omega_\xi=0\,
\label{boxomega}
\ee
follows \cite{Alessio:2020ioh}.
Hence we can write the function $\omega_\xi$ in light-cone coordinates as 
\be
\omega_\xi(x^+, x^-)=\omega_\xi^+ (x^+) + \omega_\xi^- (x^-)\,.
\la{omega+-}
\ee

\subsection{Lagrangian formulation}\label{EH_sec}

In order to implement the covariant phase space formalism described in Section \ref{sec:ECS}, a lagrangian formulation needs to be selected.
In this work, we will use the metric formulation of General Relativity. The Einstein--Hilbert (EH) action for three-dimensional gravity with negative cosmological constant is
\begin{equation}
S[g] = \frac{1}{2} \int_{\cM} (R[g]-2\Lambda)\bold{\epsilon},\label{EHaction}
\end{equation} 
where   $\bold{\epsilon}$ is the volume form, which explicitly reads
\begin{equation}
    \bold{\epsilon} = re^{2\varphi} \Bigl[1+ \frac{1}{2r^2}e^{-2\varphi} \Bigl(C_{\phi\phi}^{(2)} -\ell^2 C_{tt}^{(2)}  \Bigl) + \cO(r^{-4})\Bigl] dt\wedge dr\wedge d\phi\,.
\end{equation}
The on-shell Einstein--Hilbert Lagrangian is
\begin{equation}
    L _{\EH}= \f{1}{2}(R-2\Lambda)\sqrt{-g} \heq -\f{2}{\ell^2}re^{2\varphi}\Bigl(1 -\f{\ell^4}{4r^2}R^{(0)}\Bigl) +\cO(r^{-3})\,,
    \la{LEH}
\end{equation}
and the 3D Ricci scalar is
\begin{equation}
R[g] = -\f{6}{\ell^2} + \f{2}{\ell^2 r^2} e^{-2\varphi} \left(\frac{\ell^4}{2}e^{2\varphi}R^{(0)} + C_{\phi\phi}^{(2)} - \ell^2 C_{tt}^{(2)} \right) +\cO(r^{-4})\,.
\end{equation}
The variation of the Lagrangian yields
\begin{equation}
\delta \bold{L}_{\EH}[g, \delta g] = \frac{1}{2} \bold{\epsilon} \Bigl(R_{\mu\nu} - \frac{1}{2} Rg_{\mu\nu}+ \Lambda g_{\mu\nu}\Bigl)\delta g^{\mu\nu} + \bold{\epsilon}\nabla_\mu \theta^\mu_{\EH}[g, \delta g]\,,
\end{equation}
where $\theta^\mu_{\EH}$ is the pre-symplectic potential  current of  EH gravity given by
\be\label{EH current}
\theta^\mu_\EH=\f{1}{2}(g^{\mu\nu}\nabla^\sigma \var g_{\nu\sigma} - g^{\nu\sigma}\nabla^\mu \var g_{\nu\sigma})\,.
\ee
In order to have a finite action and a well-defined variational principle, the action \eqref{EHaction} has to be supplemented with the Gibbons--Hawking--York boundary term \cite{Brown:1992br, Gibbons:1976ue} plus a counter-term $K_0=2/\ell$, \ie
\begin{equation}
    S[g]=\f{1}{2}\int_\cM (R[g]-2\Lambda)\eps -\f{1}{2}\int_\Gamma (2K-K_0)\beps\,,
\label{GHact}
\end{equation}
where $\beps$ is the  volume form on the time-like boundary  $\Gamma$ and $K$ is the trace of its extrinsic curvature, given by 
\begin{equation}
{K}=\frac{2}{\ell}+\frac{r}{2\ell}q^{ab}\partial_rq_{ab}= \frac{2}{\ell} + \frac{1}{\ell r^2}e^{-2\varphi} (\ell^2 C^{(2)}_{tt}- C_{\phi\phi}^{(2)}) +\cO(r^{-4})\,.
\label{KPBH}
\end{equation}
When  \eqref{PBH} holds, the variational principle is well-defined if and only if the following condition \cite{Troessaert:2013fma}
\begin{equation}
    K-K_0 = o(r^{-2})
    \label{nbc}
\end{equation}
is imposed. Indeed, varying the regularized action \eqref{GHact} yields
\begin{equation}
    \var S[g] = \f{1}{2}\int_\cM (G_{\mu\nu} + \Lambda g_{\mu\nu})\var g^{\mu\nu}\eps - \int_\Gamma \frac{r^2}{\ell^2}\var\varphi (K-K_0) \sqrt{-q^{(0)}}\ dt\wedge d\phi\,,
\end{equation}
and to have a well-posed variational principle the leading order of the boundary term must be zero. This implies that $g^{\mu\nu}\var K_{\mu\nu} + K_{\mu\nu}\var g^{\mu\nu} = o(r^{-2})$, which leads to mixed boundary conditions. Thus, from \eqref{KPBH} and \eqref{119n}, the new boundary condition \eqref{nbc} implies
\begin{equation}
    q^{ab}_{(0)} C_{ab}^{(2)}=0 \Rightarrow \Box \varphi=0\,;
    \label{boxvarphi=0}
\end{equation}
furthermore, variation of the constraint \eqref{nbc}, $\var K= o(r^{-2})$, yields
\begin{equation}
     \var\Box\varphi = -(2\var\varphi)\Box\varphi + \Box \var\varphi=0 \Rightarrow \Box\var\varphi=0\,.
    \label{boxvarphi}
\end{equation}
This condition is the same as \eqref{boxomega}. Namely, requiring the covariance for $T^a{}_{a}$ is equivalent to imposing the falloff condition \eqref{nbc} to the extrinsic curvature. 

As already pointed out in \cite{Troessaert:2013fma}, the boundary condition \eqref{boxvarphi=0} is essential in order for the symplectic flux to be integrable; we confirm this 
in Appendix \ref{CF}.

Moreover, as shown in Appendix \ref{CQ}, the condition \eqref{boxvarphi} together with \eqref{chi} and \eqref{VKill} imply
\be
\Box\sigma_\xi=0\,.
\la{boxs}
\ee
It follows that $\sigma_\xi$ is also a harmonic function and the algebra in \eqref{alg_symm} consists of the  semi-direct sum of the two-dimensional chiral conformal algebra with harmonic Weyl transformations.

Finally, using the third equation in \eqref{ee}, the supplementary boundary condition \eqref{nbc} also implies that
\begin{equation}
D^a C_{ab}^{(2)}=0\,.
\end{equation}

In order to proceed with the Noether analysis, we give the expressions of some necessary geometric quantities. Using the rules discussed in Section \ref{NC}, the contractions of the volume form read as follows
\be
&\teps= \left(
\ell e^{\varphi} 
+\f{\ell}{2r^2}e^{-\varphi}C_{\phi\phi}^{(2)}
\right)
\, dr\wedge d\phi+\cO(r^{-4}),
\cr
&\beps=  \frac{r^2}{\ell}e^{2\varphi}\Bigl(1+\frac{1}{2r^2} e^{-2\varphi}(C_{\phi\phi}^{(2)}-\ell^2 C_{tt}^{(2)})+\cO(r^{-4})\Bigl) dt\wedge d\phi,
\cr
&\bar{\teps}= -\left(
r e^{\varphi} 
+\f{1}{2r}e^{-\varphi}C_{\phi\phi}^{(2)}
\right)
\,  d\phi+\cO(r^{-3}),
\ee
where in particular 
  \begin{equation}
  \begin{aligned}
  n^\mu \pa_\mu &= -\Bigl( \frac{\ell}{r} e^{-\varphi}  + \frac{\ell^3}{ r^3}e^{-3\varphi}   C_{tt}^{(2)} +\cO(r^{-5})\Bigl) \pa_t  + \Bigl( \frac{\ell}{2 r^3} e^{-3\varphi} C_{t\phi}^{(2)}+\cO(r^{-5})\Bigl)\pa_\phi,\\
   n_\mu dx^\mu &= \frac{r}{\ell}e^{\varphi}\Bigl(1 -\frac{\ell^2}{2r^2}e^{-2\varphi} C_{tt}^{(2)}  +\cO(r^{-4}) \Bigl) dt,
    \end{aligned}
  \end{equation}
are respectively the normal vector to $\Sigma$ and the relative 1-form, and
\begin{equation}
        s^\mu \pa_\mu  = \frac{r}{\ell}\pa_r,\qquad   s_\mu dx^\mu = \frac{\ell}{r}dr,
\end{equation}
are respectively the normal vector to $\Gamma$ and its relative 1-form.

\subsection{Charges and fluxes}

As we will see in a moment, asymptotically locally AdS$_3$ spacetimes with PBH boundary conditions correspond to systems with an integrable Noetherian flux. The boundary Lagrangian shift introduced in \eqref{GHact} gives rise to a vanishing corner potential, {which means that \eqref{ups} is trivially satisfied}. Therefore, in light of  Section \ref{sec:F=d}, we can safely compute the Hamiltonian charge and the associated algebra central extension by means of the EH Lagrangian formulation, as the final expressions will be the same than with the shifted Lagrangian.

 Using the charge expression in \eqref{charge_exp}, the Noether charge associated with the vector fields  \eqref{54} is computed deriving first the Lagrangian anomaly (see Appendix \ref{CQ})
 \be
\bold{a}_\xi
&=
\Bigl(\f{2r^2}{\ell^2}  e^{2\varphi} \var_\xi\varphi {+\cO(r^{-2})}\Bigl) dt\wedge d\phi
+\f2{\ell^2} \xi^\mu \bold{\epsilon}_\mu
\,,
\label{Lanomaly}
\ee
 from which
 \be
\iota_\xi \bold{L} +\bold{a}_\xi&=
-\f{2r^2}{\ell^2}  e^{2\varphi} \omega_\xi dt\wedge d\phi\,.
\la{ixiLaxi}
\ee
The first contribution on the RHS of  \eqref{charge_exp} is given by
 \be
 I_\xi\bold{\theta}
 &=- d\left( \f12  \bar \teps s_\sigma n_\mu  (\nabla^{\sigma} \xi^{\mu} - \nabla^{\sigma} \xi^{\mu}) \right)  - \frac{2}{\ell^2}   \eps_\mu \xi^\mu\,.
 \la{Ixitheta}
\ee
combining the last term on the RHS above with \eqref{ixiLaxi} gives
\be
-\iota_\xi \bold{L} - \bold{a}_\xi - \frac{2}{\ell^2}\xi^\mu \eps_\mu
&=
- \frac{1}{\ell^2}d \left(r^2  e^{2\varphi} V_\xi^t  d\phi\right)
+ \frac{1}{\ell^2}d \left(r^2  e^{2\varphi} V_\xi^\phi   dt\right)
\cr
&+{\ell^2}d \left(\ln{r}\pa_t\sigma_\xi  d\phi\right) +d \left(\ln{r} \pa_\phi \sigma_\xi dt \right)\,,
\la{ixiL+}
\ee
 When integrating over the corner, the  partial-$t$ derivative of $\sigma_\xi$ term of \eqref{ixiL+} vanishes,\footnote{This can easily be seen using Fourier expansion of  the general solution of  $\Box\sigma_\xi=0$.} the divergent term gets cancelled and we arrive at the final expression
(see Appendix \ref{CQ})
\be
\cQ_\xi  &=\int_0^{2\pi} d\phi \   V^a_\xi C^{(2)}_{ta}\,.
\ee

Let us now compute the flux content of the theory. The leakage of the symplectic information through the time-like boundary $\Gamma$ is encoded in the Noetherian flux 
\be
    \cF_\xi &= \int_\cS (\iota_\xi \bold{\theta} +\bold{q}_{\var\xi} + \bold{A}_\xi)\,.
    \la{Fxi}
\ee
As shown in Appendix \ref{CF}, one first computes the pre-symplectic potential anomaly and finds
\be
\Delta_\xi  \bold{\theta} =0\,,
\la{Dtheta}
\ee
as expected.
Therefore, we have
\be
d(\bold{q}_{\var\xi} + \bold{A}_\xi)&=
I_{\var \xi}\bold{\theta} - \iota_{\var \xi}\bold{L} 
     -\var \bold{a}_\xi \,,
     \la{dqA}
\ee
and this evaluates to (see Appendix \ref{CF})
\be
d(\bold{q}_{\var\xi} + \bold{A}_\xi)
&=-\d d\left( \f{r^2}{\ell^2} e^{2\varphi} V^t_\xi d\phi\right)
+\d d\left( \f{r^2}{\ell^2} e^{2\varphi} V^\phi_\xi dt\right)
\cr
&+\d d \left( \ell^2 \ln{r} \pa_t \sigma_\xi  d\phi+\ln{r}\pa_\phi \sigma_\xi dt\right)\,.
\ee
When integrating over the corner, the  partial-$t$ derivative of $\sigma_\xi$ term vanishes as above, and we get
\be
    \cF_\xi &= \int_\cS \iota_\xi \bold{\theta} 
    -\d \left( \int_\cS   \f{r^2}{\ell^2} e^{2\varphi} V^t_\xi d\phi \right)\,.
    \la{Fxi-3}
\ee
Finally, one computes the symplectic flux contribution in \eqref{Fxi-3} and, as shown in Appendix \ref{CF}, the boundary condition \eqref{boxvarphi=0} allows us to remove a non-integrable term. The final expression of the symplectic flux cancels the second divergent term in \eqref{Fxi-3} and yields
\be
    \cF_\xi &= \delta\left(
\ell^2 \int_0^{2\pi}d\phi\ 
( \sigma_\xi \partial_t \varphi
-\varphi \partial_t \sigma_\xi)
 \right)\,.
 \la{Fxi-4}
    \ee

We thus obtain an  integrable Noetherian flux, namely $ \cF_\xi =\d  \cP_\xi $ with
\be
  \cP_\xi = \ell^2  \int_0^{2\pi}d\phi\ 
( \sigma_\xi \partial_t \varphi
-\varphi \partial_t \sigma_\xi)\,,
\la{Pxi}
\ee
and the Hamiltonian charge is given by
\be
\cH_\xi&=\int_0^{2\pi} d\phi \   V^a_\xi C^{(2)}_{ta}
-\ell^2 \int_0^{2\pi}d\phi\ 
( \sigma_\xi \partial_t \varphi
-\varphi \partial_t \sigma_\xi)\,.
\label{H}
\ee

\subsection{Charge algebra and central charge}

Having derived the Hamiltonian charge corresponding to the PBH boundary conditions, we want to verify that the Charge representation theorem 2 holds, namely that
 its (field space) Lie derivative satisfies \eqref{rep2}, where the central charge $\alpha$ is given by \eqref{alpha}.

Using \eqref{66b},
we derive first\footnote{We stick to the convention of footnote \ref{foot}.} (see Appendix \ref{Qalgebra-app})
\be
\var_{\xi_2} \cQ_{\xi_1}
&=\ell^2\int_0^{2\pi} d\phi \  \bigg[-\f1{\ell^2} \cL_2 V^a_1 C^{(2)}_{ta} 
-V^a_1\pa_{a}\sigma_2\pa_{t}\varphi 
 +\varphi
\pa_{t} \left(
V^a_1\pa_a\sigma_2
\right)
 +  V^a_1\pa_t\pa_a\sigma_2 
\bigg]\,.
\la{dxiQ}
\ee
Next, after a few manipulations reported in Appendix \ref{Qalgebra-app}, we have
 \be
 -\var_{\xi_2} \cP_{\xi_1}&= 
 \ell^2\int_0^{2\pi}d\phi\ \left[
V^a_2\pa_a \sigma_1 \partial_t \varphi
-\varphi
\pa_{t} \left(
V^a_2\pa_a\sigma_1
\right)\right]
\cr
&-
 \ell^2 \int_0^{2\pi}d\phi\ \left[ \sigma_1 \partial_t \sigma_2 -\sigma_2 \partial_t \sigma_1
+V^a_2\pa_t\pa_a\sigma_1\right]\,.
\la{dxiP}
 \ee
Putting \eqref{dxiQ} and \eqref{dxiP} together, we arrive at
\be
\var_{\xi_2} \cH_{\xi_1}&= 
\int_0^{2\pi} d\phi \  \bigg[\cL_1 V^a_2 C^{(2)}_{ta} 
+
\ell^2\left((V^a_2\pa_a \sigma_1-V^a_1\pa_{a}\sigma_2)\pa_{t}\varphi 
 -\varphi
\pa_{t} \left(
V^a_2\pa_a\sigma_1
-
V^a_1\pa_a\sigma_2
\right)\right)\bigg]
\cr
& 
-\ell^2 \int_0^{2\pi}d\phi\ \bigg[ \sigma_1 \partial_t \sigma_2 
- V^a_1\pa_t\pa_a\sigma_2 
-1\leftrightarrow2
\bigg]
\cr
& =-  \cH_{\lbr \xi_2,\xi_1\rbr}
-\ell^2 \int_0^{2\pi}d\phi\ \bigg[ \sigma_1 \partial_t \sigma_2- V^a_1\pa_t\pa_a\sigma_2  -1\leftrightarrow2\bigg]\,,
 \ee
with the vector field bracket  given in \eqref{alg_symm}, \eqref{V12}.

Therefore, the Hamiltonian charge \eqref{H} provides a canonical representation of the asymptotic symmetry algebra of AdS$_3$ spacetimes with PBH boundary conditions, with a  central charge given by
\be
\alpha_{(\xi_2, \xi_1)}= \ell^2 \int_0^{2\pi}d\phi\ \bigg[ \sigma_1 \partial_t \sigma_2- V^a_1\pa_t\pa_a\sigma_2  -1\leftrightarrow2\bigg]\,,
 \la{ce}
\ee
and reproducing the expression found in \cite{Troessaert:2013fma}.

In order to verify the validity of the  Charge representation theorem 2  \eqref{rep2}, we are left to show that the  expression \eqref{alpha} for $\alpha_{(\xi_2, \xi_1)}$ reproduces \eqref{ce}.
To achieve this, let us point out that \eqref{dxiP}  implies
 \be
 \var_{\xi_2}\cP_{\xi_1} - \var_{\xi_1} \cP_{\xi_2}
 &=-\ell^2\int_0^{2\pi}d\phi\ \left(
V^a_2\pa_a \sigma_1 \partial_t \varphi
-\varphi
\pa_{t} \left(
V^a_2\pa_a\sigma_1
\right)-1\leftrightarrow2\right)
 \cr
 &
+\ell^2 \int_0^{2\pi}d\phi\ \left( 2 \sigma_1 \partial_t \sigma_2- V^a_1\pa_t\pa_a\sigma_2  -1\leftrightarrow2\right)
 \cr
 &=-\cP_{\lbr \xi_2,\xi_1 \rbr}+\alpha_{(\xi_2, \xi_1)}
 +\ell^2 \int_0^{2\pi}d\phi\ \left( \sigma_1 \partial_t \sigma_2  -1\leftrightarrow2\right)\,.
 \la{dP12}
 \ee
Therefore, consistency with the general expression \eqref{alpha} requires us to show that the cocycle $\cK^L_{(\xi_2, \xi_1)}$ in \eqref{cocy} takes the form
\be
\cK^L_{(\xi_2, \xi_1)}=\ell^2 \int_0^{2\pi}d\phi\ \left( \sigma_2 \partial_t \sigma_1  -1\leftrightarrow2\right)\,.
\la{K12}
\ee
This is done in Appendix \ref{L_an}, where we first derive
\be
 d\bold{c}_{(\xi_2,\xi_1)}& = 
-d\iota_{\xi_2} \bold{a}_{\xi_1}
  + \frac{2}{\ell^2}d \left( 
  \sqrt{-g}   \xi_2^r V_1^a \varepsilon_{ab}\  dx^b
\right)
+\frac{2}{\ell^2}d\left(\sqrt{-g}\xi_2^t\xi_1^\phi\ dr \right) -1\leftrightarrow 2\,,
\la{c}
 \ee
  where $\varepsilon_{t\phi}=1$. The first term on the RHS of \eqref{c} cancels the second corner integral on the RHS of  \eqref{cocy} and, combined with
$  \int_{\cS}\iota_{\xi_2}\iota_{\xi_1} \bold{L}_{\EH}$, the corner integral of $\bold{c}_{(\xi_2,\xi_1)}$ yields exactly the final expression \eqref{K12}.

To summarize, the expression \eqref{K12} for the cocycle, together with 
the relation \eqref{dP12}, reproduces  the expression of the  central charge in the case of an integrable flux, namely
\be
\alpha_{(\xi_2, \xi_1)}&= \cK^L_{(\xi_2, \xi_1)}
 +\left(  \var_{\xi_2}\cP_{\xi_1} - \var_{\xi_1} \cP_{\xi_2} +\cP_{\lbr \xi_2,\xi_1 \rbr}\right)\,.
\ee

 In the case of BH boundary conditions, the Charge representation theorem 1  \eqref{ch_al} is  recovered instead, as in this case the Noetherian flux \eqref{Fxi-4} is vanishing. 
More precisely, for $\varphi=0, \sigma_\xi=-\tfrac 12 \pa_a V^a_\xi$, the centrally extended algebra reads
 \be
 \var_{\xi_2} \cQ_{\xi_1}=-  \cQ_{\lbr \xi_2,\xi_1\rbr}-\cK^L_{(\xi_2, \xi_1)}\,,
 \ee
 with
 \be
 \cK^L_{(\xi_2, \xi_1)}=\alpha_{(\xi_2, \xi_1)}=
 \ell^2 \int_0^{2\pi}d\phi\ \bigg[ \sigma_2 \partial_t \sigma_1 -1\leftrightarrow2\bigg]\,,
 \ee
 where we used 
 \be
 \int_0^{2\pi}d\phi\ \bigg[ - V^a_1\pa_t\pa_a\sigma_2  -1\leftrightarrow2\bigg]
 =\int_0^{2\pi}d\phi\ \bigg[ -2 \sigma_1 \pa_t\sigma_2  -1\leftrightarrow2\bigg]\,,
 \ee
 valid for BH boundary conditions.

\section{Conclusion}

In this work we have investigated the canonical realization of diffeomorphism symmetries within the ambiguity-free formulation of the covariant phase space in the presence of field-dependent vector fields and Lagrangian anomalies. Building on the Noetherian split introduced in \cite{Freidel:2020xyx,Freidel:2021cjp},  once a Lagrangian is specified, the associated pre-symplectic potential, Noether charge and Noetherian flux are uniquely determined. This removes the traditional ambiguities associated with exact shifts of the pre-symplectic potential and provides a canonical geometric framework for the construction of symmetry generators.

Our main result consists in proving two Charge Representation Theorems describing the canonical realization of the diffeomorphism algebra in the two physically relevant situations in which the Noetherian flux either vanishes or is integrable. In the first case, the Noether charge itself generates the symmetry algebra through the canonical Poisson bracket and acquires a field-independent central extension. In the second case, the Hamiltonian generator is obtained by subtracting the integrable Noetherian flux from the Noether charge, leading again to a canonical representation of the symmetry algebra together with an explicit expression for the corresponding central extension. 

Beyond the technical derivation of these results, we believe that their main significance is conceptual. Traditionally, central extensions are obtained through an explicit computation of the charge algebra once an appropriate notion of charge has been selected. Within the ambiguity-free covariant phase space, the logic is instead reversed. The fundamental ingredients are the Lagrangian and the unique pre-symplectic potential associated with it. These determine unambiguously the Noetherian split and, consequently, the appropriate notion of Hamiltonian generator. {The central extension is then no longer an additional quantity that must be fixed by comparison with the charge algebra. Rather, it emerges as an intrinsic quantity, entirely determined from Lagrangian formulation of the theory by the geometric structures defining  the ambiguity-free phase space.}

From this perspective, the two Charge Representation Theorems establish a universal algorithm for the determination of central charges. Once the Noetherian split has been identified, the only information required is whether the corresponding Noetherian flux vanishes or is integrable. In the former case, the central extension is directly given by the cocycle \eqref{id_co} defined solely by the Lagrangian and its associated anomaly. In the latter, it is uniquely modified by the integrable flux contribution \eqref{alpha}. No explicit computation of the charge algebra is required.

As a non-trivial application of this general construction, we considered three-dimensional Einstein gravity with negative cosmological constant and Penrose--Brown--Henneaux boundary conditions. This example provides a particularly rich testing ground because field-dependent asymptotic symmetries, Lagrangian anomalies and an integrable Noetherian flux coexist. Applying the general formalism developed in the first part of this work, we derived the Hamiltonian generators directly from the Einstein--Hilbert Lagrangian and recovered the known central extension of the asymptotic symmetry algebra from the general expression obtained before. Ordinary Brown--Henneaux boundary conditions then appear naturally as the limiting case in which the Noetherian flux vanishes, providing an explicit realization of the first Charge Representation Theorem.

Although the present work focuses on asymptotically locally AdS$_3$ spacetimes, the construction is completely general and applies to any diffeomorphism-invariant theory admitting field-dependent symmetry generators and Lagrangian anomalies. We therefore expect the framework developed here to provide a useful starting point for the study of asymptotic symmetry algebras in a much broader class of theories, including higher-dimensional asymptotically AdS spacetimes, asymptotically flat gravity coupled to matter and gauge fields, theories with gravitational Chern--Simons terms, and more generally any diffeomorphism-invariant theory in which a (on-shell) non-vanishing Lagrangian and an associated anomaly appear.

\section*{Acknowledgements}

We would like to thank Stefano Ansoldi, Gianfranco De Simone and Domenico Salluce for discussions on this subject.

\appendix

\section{Vector field bracket}\label{app:vec-alg}

The Lie bracket components of the vector fields \eqref{54} are given by
\be
[\xi_1, \xi_2]^r_{\text{Lie}}&= \xi_1^\mu \pa_\mu \xi_2^r- 1\leftrightarrow 2
=r V_1^a \pa_a \sigma_2- 1\leftrightarrow 2\,,
\\
[\xi_1, \xi_2]^a_{\text{Lie}}&= \xi_1^\mu \pa_\mu \xi_2^a- 1\leftrightarrow 2
\cr
&=-  \frac{\ell^4}{r^2} e^{-2\varphi} \eta^{ab}\sigma_1\pa_b \sigma_2  
+V_1^b \pa_b V_2^a
\cr
&+ \frac{\ell^4}{2r^2}e^{-2\varphi}  \eta^{ac}V_1^b \pa_b \pa_c \sigma_2
- \frac{\ell^4}{r^2}e^{-2\varphi}  \eta^{ac}V_1^b \pa_b \varphi \pa_c \sigma_2
+ \frac{\ell^4}{2r^2}e^{-2\varphi} \eta^{bc}\pa_c \sigma_1  \pa_bV_2^a
- 1\leftrightarrow 2\,,
\cr
\ee
and the field variation components are given by
\be
\d_2 \xi_1^r&=0\,,
\\
\d_2 \xi_1^a
&=\frac{\ell^4}{ r^2}e^{-2\varphi}\eta^{ab} \omega_2 \pa_b \sigma_1\,.
\ee
We can thus write the components of vector field bracket \eqref{mod_lie} 
\be
\lbr \xi_1, \xi_2 \rbr^r &:=r V_1^a \pa_a \sigma_2- 1\leftrightarrow 2\,,
\\
 \lbr \xi_1, \xi_2 \rbr^a &:=
 V_1^b \pa_b V_2^a
-  \frac{\ell^4}{r^2} e^{-2\varphi} \eta^{ab}(\sigma_1+\omega_1)\pa_b \sigma_2  
\cr
&+ \frac{\ell^4}{2r^2}e^{-2\varphi}\left(  \eta^{ac}V_1^b \pa_b \pa_c \sigma_2
-2\eta^{ac}V_1^b \pa_b \varphi \pa_c \sigma_2
+\eta^{bc}\pa_c \sigma_1  \pa_bV_2^a\right)
- 1\leftrightarrow 2
\cr
&= V_1^b \pa_b V_2^a
+\frac{\ell^4}{2r^2}e^{-2\varphi}\left( 
\eta^{ab} \pa_b (V_1^c \pa_c \sigma_2)
+\eta^{ab}(\pa_c V^c_1 \pa_b \sigma_2   -\pa_b V_1^c \pa_c  \sigma_2)
-\eta^{bc}  \pa_bV_1^a \pa_c \sigma_2
\right)
- 1\leftrightarrow 2
\cr
&= V_1^b \pa_b V_2^a
+\frac{\ell^4}{2r^2}e^{-2\varphi}
\eta^{ab} \pa_b (V_1^c \pa_c \sigma_2)
- 1\leftrightarrow 2\,,
\ee
where we used
\be
\eta^{ab}(\pa_c V^c_1 \pa_b \sigma_2   -\pa_b V_1^c \pa_c  \sigma_2)
-\eta^{bc}  \pa_bV_1^a \pa_c \sigma_2=0\,,
\ee
as it can be explicitly checked for the two different components $a=t,\phi$ using the holomorphicity conditions \eqref{VKill}.

\section{Charge computation} \label{CQ}

To begin, let us derive the relation
\be
  \partial_t ( V_\xi^a\pa_a \alpha)
   &= 
   \partial_\phi V_\xi^\phi\pa_t \alpha
   +\f1{\ell^2} V_\xi^t\pa^2_\phi \alpha
   +\f1{\ell^2} \partial_\phi V_\xi^t\pa_\phi \alpha
   + V_\xi^\phi\pa_\phi \partial_t \alpha
     \cr
   &=  \partial_\phi \left(
    V_\xi^\phi\pa_t \alpha
    +\f1{\ell^2} V_\xi^t\pa_\phi \alpha
   \right)\,,
   \label{key}
 \ee
 valid for a scalar function $\alpha$ such that $\Box \alpha=0$, and where we repeatedly used \eqref{VKill}. 
In particular, the relation \eqref{key} for $\alpha=\varphi$ implies 
 \be
 \Box (V_\xi^a\pa_a \varphi)=0\,,
 \label{boxV}
 \ee
 from which, using \eqref{chi}, \eqref{VKill}, \eqref{boxomega}, it follows that
 \be
 \Box\sigma_\xi=0\,.
 \la{boxsigma-app}
 \ee

The Einstein--Hilbert pre-symplectic potential takes the following expression
\be
 \bold{\theta}= \eps_\mu \theta^\mu\,,
\ee
where
\begin{equation}
\begin{aligned}
   \theta^\mu &= \f{1}{2}(g^{\alpha\beta} \delta \Gamma^\mu_{\alpha\beta}-g^{\alpha\mu} \delta \Gamma^\beta_{\alpha\beta})\\
    &=\f{1}{2}(g^{\mu\nu}\nabla^\sigma \var g_{\nu\sigma} - g^{\nu\sigma}\nabla^\mu \var g_{\nu\sigma}),
    \label{thetaEH}
\end{aligned}
\end{equation}and 
\be
\eps_\mu=n_\mu \teps=-s_\mu \beps\,,
\ee 
with the expressions for $\teps, \beps, s_\mu, n_\mu$  given in Section \ref{EH_sec}.  The Noether charge aspect is
\begin{equation}
    d\bold{q}_\xi= I_\xi\bold{\theta}- \iota_\xi\bold{L} - \bold{a}_\xi.
    \label{qxi}
\end{equation}

The field-contraction of the pre-symplectic potential with respect to the diffeomorphisms $\xi$ yields
\be
 I_\xi\bold{\theta} &=\f12 \eps_\mu (g^{\mu\nu}\nabla^\sigma \cL_\xi g_{\nu\sigma} - g^{\nu\sigma}\nabla^\mu  \cL_\xi g_{\nu\sigma})
 \cr
 &= \eps_\mu  g^{\mu\nu} (\nabla^\sigma \nabla_{(\nu} \xi_{\sigma)} -   \nabla_\sigma \nabla_\nu \xi^\sigma+ R_{\sigma\nu} \xi^\sigma)
  \cr
 &=\f12  \eps_\mu   \nabla_\sigma (\nabla^{\sigma} \xi^{\mu} - \nabla^{\mu} \xi^{\sigma})   - \frac{2}{\ell^2}   \eps_\mu \xi^\mu
  \cr
 &=- d\left( \f12  \bar \teps s_\sigma n_\mu  (\nabla^{\sigma} \xi^{\mu} - \nabla^{\mu} \xi^{\sigma}) \right)  - \frac{2}{\ell^2}   \eps_\mu \xi^\mu\,,
 \la{1noe}
\ee
where we used Einstein’s field equations in three dimensions to write $R_{\sigma\nu}=2\Lambda g_{\sigma\nu}$, and $\teps_\beta=- \bar \teps s_\beta$---which follows from $ \bar \teps=-\iota_{\hat s}\teps$.

Integrating the total derivative term in  \eqref{1noe} on the corner at infinity yields
\begin{equation}
\begin{aligned}
&-\frac{1}{2} \int_\cS \bar{\teps} \left(n_\mu s^\nu \nabla_\nu \xi^\mu -  n^\mu s_\nu\nabla_\mu \xi^\nu\right)\\
&=\frac{1}{2\ell^2}  \int_0^{2\pi} d\phi \  \Bigl(2\ell^2 V_\xi^\phi C^{(2)}_{t\phi} + \ell^2 V_\xi^t C_{tt}^{(2)} +V_\xi^t C_{\phi\phi}^{(2)} +\ell^4 \pa_t\sigma_\xi\Bigl)
\cr
&\quad +\frac{ r^2}{\ell^2} \int_0^{2\pi} d\phi \ e^{2\varphi} V_\xi^t,
\label{Xi}
\end{aligned}
\end{equation}
where we see  the last  term is divergent.  

The next ingredient we need is the anomaly of the EH Lagrangian $\bold{a}_\xi$. 
Since $I_{\delta\xi}\bold{L}_{\EH}=0$, this  is given by
\be
\Delta_\xi \bold{L}_{\EH}&=
\delta_\xi\bold{L}_{\EH} -\cL_\xi \bold{L}_{\EH}
\cr
&=
d \Bigl[\Bigl(\f{2r^2}{\ell^2}  e^{2\varphi} \var_\xi\varphi+\ell^2{ e^{2\varphi}}\Box \delta_\xi\varphi \ln r {+\cO(r^{-2})}\Bigl) dt\wedge d\phi\Bigl] -d\iota_\xi \bold{L}_{\EH}\,,
\ee
where we used \eqref{R0}.
From  the definition \eqref{DL},
it follows that 
\be
\bold{a}_\xi
&=
\Bigl(\f{2r^2}{\ell^2}  e^{2\varphi} \var_\xi\varphi {+\cO(r^{-2})}\Bigl) dt\wedge d\phi
+\f2{\ell^2} \xi^\mu \bold{\epsilon}_\mu
\,,
\label{Lanomaly-app}
\ee
where we used the condition $\Box \omega_\xi=0$.
Therefore, we see that the Lagrangian contribution in \eqref{qxi}
is such that 
\be
\iota_\xi \bold{L} &=  -\f{2}{\ell^2} \xi^\mu \eps_\mu
\cr
&=- \bold{a}_\xi\,
-\f{2r^2}{\ell^2}  e^{2\varphi} \omega_\xi dt\wedge d\phi\,.
\ee
This, together with the anomaly contributions in \eqref{qxi} and   the last contribution in \eqref{1noe}, yield
\be
-\iota_\xi \bold{L} - \bold{a}_\xi - \frac{2}{\ell^2}\xi^\mu \eps_\mu&=
\f{2r^2}{\ell^2}  e^{2\varphi} \omega_\xi dt\wedge d\phi- \frac{2}{\ell^2}\xi^\mu \eps_\mu
\cr
&=-\f{r^2}{\ell^2} \pa_a (e^{2\varphi} V_\xi^a) dt\wedge d\phi
\cr
&- \frac{1}{\ell^2} \pa_r \left(r^2  e^{2\varphi} V_\xi^t\right)  dr\wedge d\phi
+ \frac{1}{\ell^2} \pa_r \left(r^2  e^{2\varphi} V_\xi^\phi\right)   dr\wedge dt
\cr
&+{\ell^2}d \left(\ln{r}\pa_t\sigma_\xi  d\phi\right) +d \left(\ln{r} \pa_\phi \sigma_\xi dt \right)
\cr
&=
- \frac{1}{\ell^2}d \left(r^2  e^{2\varphi} V_\xi^t  d\phi\right)
+ \frac{1}{\ell^2}d \left(r^2  e^{2\varphi} V_\xi^\phi   dt\right)
\cr
&+{\ell^2}d \left(\ln{r}\pa_t\sigma_\xi  d\phi\right) +d \left(\ln{r} \pa_\phi \sigma_\xi dt \right)\,,
\ee
where we used \eqref{chi} and \eqref{boxsigma-app}.
For later use, we also notice that
\be
I_{{\delta \xi}}\bold{\theta}_{EH}
=  - \frac{2}{\ell^2} \eps_\mu\var \xi^\mu+\cO(r^{-2}).
\label{Ivarxi}
\ee

Therefore, the Noether charge reads
\begin{equation}
\begin{aligned}
\cQ_\xi  &= \frac{1}{2\ell^2}  \int_0^{2\pi}    \Bigl(2\ell^2 V_\xi^\phi C^{(2)}_{t\phi} + \ell^2 V_\xi^t C_{tt}^{(2)} +V_\xi^t C_{\phi\phi}^{(2)} +\ell^4 \pa_t\sigma_\xi\Bigl)\  d\phi
\cr
&+\int_0^{2\pi}  \ell^2\pa_t\sigma_\xi \ln r\ d\phi\,.
\la{Qxi-app}
\end{aligned}
\end{equation}
Using Fourier expansion of  the general solution of  \eqref{boxsigma-app}, it is immediate to see that the  partial-$t$ derivative terms in \eqref{Qxi-app} vanish, arriving at the final expression
\be
\cQ_\xi  &=\int_0^{2\pi} d\phi \   V^a_\xi C^{(2)}_{ta}\,,
\ee
where we used the condition \eqref{boxvarphi=0}.

\section{Flux computation} \label{CF}
 
We recall the expression of the Noetherian symplectic flux 
\be
    \cF_\xi &= \int_\cS (\iota_\xi \bold{\theta} +\bold{q}_{\var\xi} + \bold{A}_\xi)\,.
    \la{Fxi}
\ee

The different components of the pre-symplectic potential
\be
\theta^\mu_\EH= \f{1}{2} (g^{\alpha\beta} \delta \Gamma^\mu_{\alpha\beta}-g^{\alpha\mu} \delta \Gamma^\beta_{\alpha\beta})\,,
\ee
are given by (see Appendix \ref{Chris-app} for a list of Christoffel symbols)
\be
\theta^t_\EH
&=
 \f{\ell^2}{r^2}e^{-2\varphi} \delta \partial_t \varphi
+\cO(r^{-4})\,,
\label{thetat}
\\
\theta^r_\EH
&=
-\f{2r}{\ell^2}
 \delta \varphi
+\f{e^{-2\varphi}}{2r \ell^2}\delta(C_{\phi\phi}^{(2)}-\ell^2 C_{tt}^{(2)} )
+\cO(r^{-3})\,,
\label{thetar}
\\
\theta^\phi_\EH 
&=-\f{e^{-2\varphi}}{r^2} \delta \pa_\phi\varphi
+\cO(r^{-4})\,,
\label{thetap}
\ee
from which
\be
\bold{\theta}&=\d \bold{\eta}+\cO(r^{-2})\,,
\la{theta-app}
\ee
with
\be
\bold{\eta}=\left(
\f{r^2}{\ell^2}e^{2\varphi} \ dt\wedge d\phi +\f{\ell^2} r \pa_t\varphi \ dr \wedge d\phi +\f{1} r \pa_\phi \varphi \ dr \wedge dt
\right)
\ee

We first show that the  anomaly of the pre-symplectic potential \eqref{theta-app}  vanishes. Using the definition of the anomaly operator \eqref{ano}, we have
\be
\Delta_\xi \bold {\theta}
&=\d \d_\xi \bold {\eta}-\d d \iota_\xi  \bold {\eta} +  d \iota_{\d\xi}  \bold {\eta}- \iota_\xi \d d  \bold {\eta}-\d_{\d\xi}\bold {\eta}\,.
\la{Dxieta}
\ee
 Neglecting terms of $\cO(r^{-2})$  we have 
\be
(\Delta_\xi \bold {\theta})_{t\phi}&=
\d\left(-
\f{2r^2}{\ell^2}e^{2\varphi}\omega_\xi \right)\ dt\wedge d\phi 
\cr
&-d\left(
\f{2r^2}{\ell^2}e^{2\varphi}\d\varphi \left( \xi^t\ d\phi- \xi^\phi dt\right)
+\sigma_\xi (\ell^2 \pa_t \d\varphi  d\phi + \pa_\phi\d\varphi dt)
\right)
\cr
&- \iota_\xi \d \left(
\f{2r}{\ell^2}e^{2\varphi} \ dr\wedge dt\wedge d\phi 
\right)
\cr
&=-
\f{4r^2}{\ell^2}e^{2\varphi}\d\varphi (\omega_\xi +\sigma_\xi) \ dt\wedge d\phi
-
\f{2r^2}{\ell^2}e^{2\varphi}\d \omega_\xi \ dt\wedge d\phi 
\cr
&-\f{2r^2}{\ell^2}e^{2\varphi}
\left(
\xi^a \pa_a \d\varphi 
+2\xi^a \d \varphi \pa_a\varphi
+\d \varphi  \pa_a \xi^a
 \right)
\ dt\wedge d\phi 
\cr
&-\left( 
 \ell^2 \pa_t \sigma_\xi \pa_t \d\varphi  -\pa_\sigma \sigma_\xi \pa_\phi\d\varphi )
+\sigma_\xi (\ell^2 \pa^2_t  - \pa^2_\phi)\d\varphi
\right)
\ dt\wedge d\phi 
\cr
&=
-\f{2}{\ell^2}e^{2\varphi}
\left(
\xi^a_{(2)} \pa_a \d\varphi 
+2\xi^a_{(2)} \d \varphi \pa_a\varphi
+\d \varphi  \pa_a \xi^a_{(2)}
 \right)
\ dt\wedge d\phi 
\cr
&-
( \ell^2 \pa_t \sigma_\xi \pa_t \d\varphi  -\pa_\sigma \sigma_\xi \pa_\phi\d\varphi )
\ dt\wedge d\phi 
\cr
&=
-\f{2}{\ell^2}e^{2\varphi}
\left(
2\xi^a_{(2)} \d \varphi \pa_a\varphi
-2  \xi^a_{(2)}  \d \varphi \pa_a \varphi
+\f{\ell^4}{2r}\d \varphi \Box \sigma_\xi
 \right)
\ dt\wedge d\phi 
\cr
&=0\,,
\ee
\be
    (\Delta_\xi\bold{\theta})_{r\phi}&=-\frac{\ell^2}{r}\pa_t\d\omega_\xi dr\wedge d\phi-\iota_\xi\delta\left(-\f{2r}{\ell^2}e^{2\varphi}+\f{\ell^2}{r}\pa^2_t\varphi-\f1r\pa^2_\phi\varphi\right)dt\wedge dr\wedge d\phi
    \cr
    &-d\left(\left(2\d\varphi e^{2\varphi}\frac{r^2}{\ell^2}\xi^t+\ell^2\partial_t\d\varphi\sigma_\xi\right)d\phi-\left(\frac{\ell^2}{r}\partial_t\d\varphi\xi^\phi+\frac{1}{r}\pa_\phi\d\varphi \xi^t\right)dr\right)
    \cr
    &=\frac{\ell^2}{r}\pa_t(V_\xi^a\pa_a\d\varphi)dr\wedge d\phi+\frac{4r}{\ell^2}\d\varphi e^{2\varphi}V_{\xi}^t dr\wedge d\phi
    \cr
    &-\left(\frac{4r}{\ell^2}\d\varphi e^{2\varphi}V_\xi^t+\pa_{\phi}\left(\frac{1}{r}V_\xi^t\pa_\phi\d\varphi+\frac{\ell^2}{r} V_\xi^\phi\pa_t\d\varphi\right)\right)dr\wedge d\phi
    \cr
    &=  \frac{1}{r}\big(\ell^2\pa_t V_\xi^\phi\pa_\phi\d\varphi+\ell^2 V_\xi^\phi\pa_t\pa_\phi\d\varphi+\ell^2\pa_t V_\xi^t\pa_t\d\varphi +\ell^2 V_\xi^t\pa_t^2\d\varphi
    \cr
    &-\ell^2 V_\xi^\phi\pa_\phi\pa_t\d\varphi-\ell^2\pa_t\d\varphi\pa_\phi V_\xi^\phi-\pa_\phi^2\d\varphi V_\xi^t-\pa_\phi\d\varphi\pa_\phi V_{\xi}^t\big)dr \wedge d\phi
    \cr
    &=0\,,
\ee
\be
        (\Delta_{\xi}\bold{\theta})_{rt}&=-\f{1}{r}\partial_\phi\d\omega_\xi dr\wedge dt -\iota_\xi\d\left(-\f{2r}{\ell^2}e^{2\varphi}+\f{\ell^2}{r}\pa^2_t\varphi-\f1r\pa^2_\phi\varphi\right)dt\wedge dr\wedge d\phi
        \cr
        &-d\left(\left(-2\d\varphi\f{r^2}{\ell^2}e^{2\varphi}\xi^{\phi}+\sigma_\xi\partial_\phi\d\varphi\right)dt-\left(\frac{\ell^2}{r}\xi^{\phi}\partial_t\d\varphi+\f1r\xi^t\partial_{\phi}\d\varphi\right)dr\right)
        \cr
        &=\f1r\pa_\phi(V_\xi^a\pa_a\d\varphi)dr\wedge dt-4\d\varphi\frac{r}{\ell^2}e^{2\varphi}V_\xi^\phi dr\wedge dt
        \cr
        &+\left(4\delta\varphi\frac{r}{\ell^2}e^{2\varphi}V_\xi^\phi-\partial_t\left(\frac{\ell^2}{r}V_\xi^\phi\partial_t\d\varphi+\frac{1}{r}V_\xi^t\pa_\phi\d\varphi\right)\right)dr\wedge dt
        \cr
        &=\f1r\big(\pa_\phi V_\xi^\phi\pa_\phi\d\varphi+V_\xi^\phi\pa^2_\phi\d\varphi+\pa_\phi V_\xi^t\pa_t\d\varphi+V_\xi^t\pa_t\pa_\phi\d\varphi
        \cr
        &-\ell^2\pa_t^2\d\varphi V_\xi^\phi-\ell^2\partial_t\d\varphi\pa_t V_\xi^\phi-\pa_t\pa_\phi\d \varphi V^t_\xi-\pa_\phi\d\varphi\pa_t V_\xi^t\big)dr\wedge dt
        \cr
        &=0\,.
\ee
In the computation above we used \eqref{xia}, \eqref{chi}, \eqref{56},  \eqref{boxvarphi}, \eqref{boxs} and $\d\omega_\xi=-V^a_\xi\pa_a\d\varphi$.

Hence, using  $\Delta_\xi \bold {\theta}=0$, we have
\be
d(\bold{q}_{\var\xi} + \bold{A}_\xi)&=
I_{\var \xi}\bold{\theta} - \iota_{\var \xi}\bold{L} 
     -\var \bold{a}_\xi 
\cr
&=\f{2r}{\ell^2} \d(\sqrt{-g} \omega_\xi )dt\wedge d\phi
-\f2{\ell^2}\d( \xi^\mu  \eps_\mu)\,,
\ee
where we used \eqref{Ivarxi}.
Therefore, we have
\be
d(\bold{q}_{\var\xi} + \bold{A}_\xi)
&=\f{2r}{\ell^2} \d(\sqrt{-g} (\omega_\xi +\sigma_\xi))dt\wedge d\phi
-\f2{\ell^2}\d(\sqrt{-g}  ( \xi^t  dr\wedge d\phi-\xi^\phi  dr\wedge dt))
\cr
&=-\f{r^2}{\ell^2} \d\pa_a(e^{2\varphi} V^a_\xi)dt\wedge d\phi
-\f2{\ell^2}\d(\sqrt{-g}  ( V_\xi^t  dr\wedge d\phi-V_\xi^\phi  dr\wedge dt))
\cr
&+\d\left( \f{\ell^2}r \pa_t \sigma_\xi  dr\wedge d\phi+\f1r\pa_\phi \sigma_\xi dr\wedge dt\right)
\cr
&=-\d d\left( \f{r^2}{\ell^2} e^{2\varphi} V^t_\xi d\phi\right)
+\d d\left( \f{r^2}{\ell^2} e^{2\varphi} V^\phi_\xi dt\right)
\cr
&+\d d \left( \ell^2 \ln{r} \pa_t \sigma_\xi  d\phi+\ln{r}\pa_\phi \sigma_\xi dt\right)\,,
\ee
where again we used \eqref{boxsigma-app}.
We thus can write
\be
    \cF_\xi &= \int_\cS \iota_\xi \bold{\theta} 
    -\d \left( \int_\cS   \f{r^2}{\ell^2} e^{2\varphi} V^t_\xi d\phi \right)
  \,.
    \la{Fxi-2}
\ee

In order to compute the first term in \eqref{Fxi-2}, we use   the expression \eqref{theta-app} of the pre-symplectic potential components to derive
\be
 \cF^\theta_\xi&=
        \int_\cS \iota_\xi \bold{\theta}= \int_0^{2\pi}d\phi\ \sqrt{-g} \ (  \xi^r \theta^t-\xi^t \theta^r) 
\cr
 &=\int_0^{2\pi}d\phi\ 
 \bigg[
\ell^2 (\sigma_\xi\delta \partial_t \varphi
- \partial_t \sigma_\xi \delta \varphi) \cr
& +\f{1}{\ell^2}
\left(2r^2e^{2\varphi}V_\xi^t
+ V_\xi^t \left(C_{\phi\phi}^{(2)}-\ell^2 C_{tt}^{(2)} \right)
 \right)
 \delta \varphi
-\f{V_\xi^t}{2 \ell^2} \delta\left(C_{\phi\phi}^{(2)}-\ell^2 C_{tt}^{(2)} \right)
\bigg]\,.
\label{Fthetaxi}
\ee
We now use the on-shell relation
\be
C_{\phi\phi}^{(2)} -\ell^2 C_{tt}^{(2)}
=\ell^4 e^{2\varphi}\Box \varphi
\ee
to write 
\be
\left(C_{\phi\phi}^{(2)}-\ell^2 C_{tt}^{(2)} \right)
 \delta \varphi-
 \f12\delta\left(C_{\phi\phi}^{(2)}-\ell^2 C_{tt}^{(2)} \right)
 &=\f{\ell^4}2 \Box \varphi\delta e^{2\varphi}-
 \f{\ell^4}2\delta\left(e^{2\varphi}\Box \varphi \right)
 \cr
& =-
 \f{\ell^4}2 e^{2\varphi} \delta\left(\Box \varphi \right)\,,
\ee
from which
\be
 \cF^\theta_\xi&\heq
 \delta\left(
\f{r^2}{\ell^2}\int_0^{2\pi}d\phi\ e^{2\varphi}V_\xi^t
+\ell^2 \int_0^{2\pi}d\phi\ 
( \sigma_\xi \partial_t \varphi
-\varphi \partial_t \sigma_\xi)
\right)
-\frac{\ell^2}{2} \int_0^{2\pi}d\phi\ V_{\xi}^te^{2\varphi} \delta\left(\Box \varphi \right)\,.
 \ee
We thus see that the boundary condition \eqref{boxvarphi=0} is crucial in order to have an integrable symplectic flux and finally arrive at
\be
\cF_\xi= \delta\left(
\ell^2 \int_0^{2\pi}d\phi\ 
( \sigma_\xi \partial_t \varphi
-\varphi \partial_t \sigma_\xi)
\right)\,.
\ee

\section{Charge bracket}\label{Qalgebra-app}

By means of 
\be
\delta_\xi C^{(2)}_{ab} 
&=  \cL_{V_\xi} C_{ab}^{(2)} +\ell^2\left(\eta_{ab}\eta^{cd}\pa_c\sigma_\xi \pa_d \varphi +  \pa_b\pa_a\sigma_\xi -2\pa_{(a}\varphi \pa_{b)}\sigma_\xi\right)\,,
\ee
we can compute the Noether charge (field-space) Lie derivative
\be
\var_{\xi_2} \cQ^{\van EH}_{\xi_1}&= 
\var_{\xi_2} \left(\int_0^{2\pi} d\phi \   V^a_1 C^{(2)}_{ta}  \right)
\cr
&= \ell^2 \int_0^{2\pi} d\phi \  \bigg[\f1{\ell^2}  V^a_1 \cL_2C^{(2)}_{ta}  + V^a_1 (\eta_{at}\eta^{cd}\pa_c\sigma_2 \pa_d \varphi +  \pa_t\pa_a\sigma_2 -2\pa_{(a}\varphi \pa_{t)}\sigma_2) \bigg]
\cr
&=\ell^2\int_0^{2\pi} d\phi \  \bigg[\f1{\ell^2}  V^a_1 \cL_2C^{(2)}_{ta} 
-V^a_1\pa_{a}\sigma_2\pa_{t}\varphi 
\cr
& +\varphi\pa_\phi
\left(
V^\phi_1\pa_{t}\sigma_2
+\f1{\ell^2} V^t_1\pa_\phi\sigma_2
\right)
 +  V^a_1\pa_t\pa_a\sigma_2 
\bigg]
\cr
&=\ell^2\int_0^{2\pi} d\phi \  \bigg[\f1{\ell^2}  V^a_1 \cL_2C^{(2)}_{ta} 
-V^a_1\pa_{a}\sigma_2\pa_{t}\varphi 
\cr
& +\varphi
\left(
\pa_\phi V^\phi_1\pa_{t}\sigma_2
+V^\phi_1\pa_{t}\pa_\phi\sigma_2
+\f1{\ell^2}\pa_\phi V^t_1\pa_\phi\sigma_2
+\f1{\ell^2} V^t_1\pa^2_\phi\sigma_2
\right)
 +  V^a_1\pa_t\pa_a\sigma_2 
\bigg]
\cr
&=\ell^2\int_0^{2\pi} d\phi \  \bigg[-\f1{\ell^2} \cL_2 V^a_1 C^{(2)}_{ta} 
-V^a_1\pa_{a}\sigma_2\pa_{t}\varphi 
 +\varphi
\pa_{t} \left(
V^a_1\pa_a\sigma_2
\right)
 +  V^a_1\pa_t\pa_a\sigma_2 
\bigg]\,,
\ee
where we used \eqref{VKill}, \eqref{boxsigma-app},
and that  on-shell
\be
 \int_0^{2\pi} d\phi\ \cL_2 (V^a_1 C^{(2)}_{ta} )=0\,.
 \ee
Next, we have
\be
-\var_{\xi_2} \cP^{\van EH}_{\xi_1}
&=\ell^2 \int_0^{2\pi}d\phi\ ( \sigma_1 \partial_t \omega_2 -\omega_2 \partial_t \sigma_1)
\cr
&=-\ell^2 \int_0^{2\pi}d\phi\ \bigg[ \sigma_1 \partial_t \sigma_2 -\sigma_2 \partial_t \sigma_1
+\f12
\sigma_1 \partial_t \pa_a V^a_2
-\f12 \pa_a V^a_2 \partial_t \sigma_1
\cr
&+\sigma_1 \partial_t ( V^a_2\pa_a \varphi)
-V^a_2\pa_a \varphi \partial_t \sigma_1
\bigg]\,,
\ee
where we used  \eqref{chi}.
We now massage  the  $\varphi$-dependent terms  in the last expression;
by means of the relation \eqref{key},
we can write
 \be
 &-\int_0^{2\pi}d\phi\ \bigg[
\sigma_1 \partial_t ( V^a_2\pa_a \varphi)
-V^a_2\pa_a \varphi \partial_t \sigma_1
 \bigg]
 \cr
 &=\int_0^{2\pi}d\phi\ \bigg[
 V^a_2\pa_a \sigma_1 \partial_t \varphi
+  \f1{\ell^2}V^t_2 \pa_\phi\sigma_1\pa_\phi \varphi
+V^\phi_2\pa_\phi \varphi \partial_t \sigma_1
 \bigg]
\cr
&=
 \int_0^{2\pi}d\phi\ \bigg[
V^a_2\pa_a \sigma_1 \partial_t \varphi
-\varphi
\pa_{t} \left(
V^a_2\pa_a\sigma_1
\right)\bigg]\,,
\label{PP1}
 \ee
 where again we used \eqref{boxsigma-app}.
Moreover,   we can use \eqref{key} to write
 \be
 \f12 \int_0^{2\pi}d\phi\ \left(
 \sigma_1 \partial_t \pa_a V^a_2
- \pa_a V^a_2 \partial_t \sigma_1\right)
=\int_0^{2\pi}d\phi\ V^a_2\pa_t\pa_a\sigma_1\,.
\label{PP2}
 \ee

 Relations \eqref{PP1}, \eqref{PP2} allow us to write
 \be
 -\var_{\xi_2} \cP_{\xi_1}&= 
 \ell^2\int_0^{2\pi}d\phi\ \left[
V^a_2\pa_a \sigma_1 \partial_t \varphi
-\varphi
\pa_{t} \left(
V^a_2\pa_a\sigma_1
\right)\right]
\cr
&-
 \ell^2 \int_0^{2\pi}d\phi\ \left[ \sigma_1 \partial_t \sigma_2 -\sigma_2 \partial_t \sigma_1
+V^a_2\pa_t\pa_a\sigma_1\right]\,.
 \ee

\section{Cocycle}\label{L_an}

In order to compute the cocycle \eqref{cocy}, we need to derive the 1-form $\bold{c}_{(\xi,\zeta)}$ given by
\be
 d\bold{c}_{(\xi,\zeta)} = (\delta_\xi\bold{a}_\zeta -\cL_\xi\bold{a}_\zeta -\xi\leftrightarrow \zeta) +\bold{a}_{\lbr \xi, \zeta \rbr }\,.
\ee
In all the calculations that follow, we will neglect sub-leading terms of order $\cO(r^{-2})$ or higher, as those won't contribute when evaluated at the corner at infinity. 
We start with the Lie derivative  of \eqref{Lanomaly}, namely $\cL_\xi\bold{a}_\zeta= d \iota_\xi \bold{a}_\zeta +\iota_\xi d\bold{a}_\zeta $, and focus on the second piece which is not an explicit total derivative; this gives
\be
\iota_\xi d \bold{a}_\zeta&=\f12 \iota_\xi(\pa_\sigma \bold{a}_{\zeta \mu \nu} dx^\sigma \wedge dx^\mu\wedge dx^\nu)
\cr
&=\f12 \left(\xi^\sigma\pa_\sigma \bold{a}_{\zeta \mu \nu}  dx^\mu\wedge dx^\nu
-2\xi^\nu \pa_\sigma \bold{a}_{\zeta \mu \nu}   dx^\mu \wedge dx^\sigma
\right)
\cr
&=\left(
 \xi^r\pa_r \bold{a}_{\zeta t \phi} 
- \xi^r \pa_\phi \bold{a}_{\zeta t r}
   -\xi^r \pa_t \bold{a}_{\zeta  r \phi } 
   \right) dt\wedge d\phi
\cr
&+\left(
\xi^\phi\pa_\phi \bold{a}_{\zeta t r} 
 - \xi^\phi\pa_r \bold{a}_{\zeta t \phi}
 +\xi^\phi\pa_t \bold{a}_{\zeta r \phi}  
  \right) dt\wedge dr
\cr
&+\left(
\xi^t\pa_t \bold{a}_{\zeta r \phi} 
+ \xi^t \pa_\phi \bold{a}_{\zeta  tr }
 -\xi^t\pa_r \bold{a}_{\zeta  t\phi } 
\right)  dr\wedge d\phi\,,
\ee
from which
\be
\iota_\xi d \bold{a}_\zeta-\xi\leftrightarrow \zeta
&=\f2{\ell^2}\left(
- 2 \sqrt{-g}\xi^r (\omega_\zeta+\sigma_\zeta)
- \xi^r \pa_\phi (\sqrt{-g} \zeta^\phi)
   -\xi^r \pa_t  (\sqrt{-g} \zeta^t)
   \right) dt\wedge d\phi
   \cr
&+\f2{\ell^2} \left(
\xi^\phi\pa_\phi (\sqrt{-g} \zeta^\phi)
 +2\sqrt{-g} \xi^\phi (\omega_\zeta+\sigma_\zeta)
 +\xi^\phi\pa_t (\sqrt{-g} \zeta^t)
  \right) dt\wedge dr
  \cr
&+\f2{\ell^2}\left(
\xi^t\pa_t(\sqrt{-g} \zeta^t)
+ \xi^t \pa_\phi  (\sqrt{-g} \zeta^\phi)
 +2 \sqrt{-g}\xi^t  (\omega_\zeta+\sigma_\zeta)
\right)  dr\wedge d\phi\
 \cr
&
  -\xi\leftrightarrow \zeta
\cr
&=   \f2{\ell^2}\left( \xi^\mu \pa_a ( \epsilon_\mu \zeta^a)
+2   (\omega_\zeta+\sigma_\zeta) \xi^\mu \epsilon_\mu
\right)
  -\xi\leftrightarrow \zeta
  \cr
&=   \f2{\ell^2}\left( \xi^\mu \pa_\nu ( \epsilon_\mu \zeta^\nu)
+2   \omega_\zeta \xi^\mu \epsilon_\mu
\right)
  -\xi\leftrightarrow \zeta\,,
  \la{iotada}
\ee
where in the last passage we used that, on-shell of $\Box\varphi=0$, we have
\be
\sqrt{-g}&=re^{2\varphi}+ \cO(r^{-3})\,.
\ee

Next, we compute
\be
 \d_\xi \bold{a}_\zeta-\xi\leftrightarrow \zeta
&=
 \f2{\ell^2}
 \left( \d_\xi  \zeta^\mu \epsilon_\mu 
 -2 \omega_\xi  \zeta^\mu \epsilon_\mu 
+ r^2  e^{2\varphi} \var_\xi \var_\zeta \varphi \,dt\wedge d\phi
\right)-\xi\leftrightarrow \zeta\,,
 \ee
 where we used 
$\var_\xi\varphi=-\omega_\xi$. 
Moreover, we have
\be
 \bold{a}_{\lbr \xi, \zeta \rbr}&=
\f2{\ell^2}
\left(
\lbr \xi, \zeta\rbr^\mu \epsilon_\mu
+ r^2 e^{2\varphi}  \delta_{\lbr \xi, \zeta \rbr} \varphi\ dt\wedge d\phi
\right)
\cr
&=\f2{\ell^2}
\left(
  (\xi^\nu\partial_\nu \zeta^\mu +\delta_\zeta \xi^\mu)\,\epsilon_\mu
  + r^2 e^{2\varphi}  \delta_\zeta   \delta_\xi\varphi\ dt\wedge d\phi
  \right)
 -\xi\leftrightarrow \zeta\,.
\ee
We thus can write
\be
 (\delta_\xi\bold{a}_\zeta  -\xi\leftrightarrow \zeta) +\bold{a}_{\lbr \xi, \zeta \rbr }
 &=\f2{\ell^2}
\left(
  \xi^\nu\partial_\nu \zeta^\mu -2 \omega_\xi  \zeta^\mu 
  \right)\epsilon_\mu 
  -\xi\leftrightarrow \zeta\,.
  \la{dabra}
\ee

Using \eqref{iotada} and \eqref{dabra}, we  obtain
\be
 d\bold{c}_{(\xi,\zeta)}
 &=-d\iota_\xi \bold{a}_\zeta
  +\f2{\ell^2}\partial_\nu \left( 
   \epsilon_\mu  \xi^\nu \zeta^\mu
\right)
  -\xi\leftrightarrow \zeta\,.
 \ee
 We now express
 \be
 \partial_\nu \left( 
   \epsilon_\mu  \xi^\nu \zeta^\mu
\right)
  -\xi\leftrightarrow \zeta
  &=
   \partial_r \left( 
   \epsilon_a  \xi^r \zeta^a
\right)
+\partial_a \left( 
   \epsilon_r  \xi^a \zeta^r
\right)
+\partial_a \left( 
   \epsilon_b  \xi^a \zeta^b
\right)
  -\xi\leftrightarrow \zeta
   \cr
  &=d \left( 
  \sqrt{-g}   \xi^r V_\zeta^a \varepsilon_{ab}\  dx^b
\right)
+d\left(\sqrt{-g}\xi^t\zeta^\phi\ dr \right)
  -\xi\leftrightarrow \zeta\,,
  \la{pxizeta}
 \ee
 where $\varepsilon_{t\phi}=1$, and we used
 \be
 \partial_a \left( 
   \epsilon_b  \xi^a \zeta^b
\right)
  -\xi\leftrightarrow \zeta=d\left(\sqrt{-g}\xi^t\zeta^\phi\ dr \right)
  -\xi\leftrightarrow \zeta\,,
 \ee
 and
 \be
 \partial_a \left(  \sqrt{-g}
 \xi^r \zeta^a \ dt\wedge d\phi
\right)
 -\xi\leftrightarrow \zeta
&=
\partial_a \left(  \sqrt{-g}
 \xi^r V_\zeta^a \ dt\wedge d\phi
\right)
+\frac{\ell^4}{2}\partial_a \left(  
 \sigma_\xi  \eta^{ab}\pa_b \sigma_\zeta\ dt\wedge d\phi
\right)
 -\xi\leftrightarrow \zeta
 \cr
 &= \partial_c \left(  \sqrt{-g}
 \xi^r V_\zeta^a \varepsilon_{ab} \ dx^c\wedge dx^b
\right)
 -\xi\leftrightarrow \zeta\,.
 \ee
 We can thus express
 \be
 d\bold{c}_{(\xi,\zeta)}& = 
-d\iota_\xi \bold{a}_\zeta
  + \frac{2}{\ell^2}d \left( 
  \sqrt{-g}   \xi^r V_\zeta^a \varepsilon_{ab}\  dx^b
\right)
+\frac{2}{\ell^2}d\left(\sqrt{-g}\xi^t\zeta^\phi\ dr \right) -\xi\leftrightarrow \zeta\,.
\la{cxizeta}
 \ee

The first  contribution to the cocyle \eqref{cocy} reads
\be
\int_\cS \iota_\xi \iota_\zeta \bold{L}_{\EH}&=
-\f{2}{\ell^2}\int_\cS \iota_\xi \iota_\zeta \eps
\cr
&=-\f{2}{\ell^2}\int_0^{2\pi} (\zeta^t\xi^r-\zeta^r\xi^t)\sqrt{-g} \,d\phi\,.
\la{iiL}
\ee
From \eqref{cxizeta} and \eqref{iiL}, we finally obtain
 \be
 \cK^L_{(\xi, \zeta)}&=\int_{\cS}\iota_\xi\iota_\zeta \bold{L}_{\EH} + \int_\cS(\iota_\xi \bold{a}_\zeta-\iota_\zeta \bold{a}_\xi)+\int_\cS \bold{c}_{(\xi,\zeta)} 
 \cr
&= \ell^2 \int_0^{2\pi}  \sigma_\xi \pa_t \sigma_\zeta \, d\phi
 -\xi\leftrightarrow \zeta\,.
 \la{Kxizeta}
\ee

\section{List of Christoffel symbols}\la{Chris-app}

 \begin{equation}
\begin{aligned}
 \Gamma^r_{rr} &= -\frac{1}{r}\,,\\  
\Gamma^r_{tt} &= \f{r^3}{\ell^4} e^{2\varphi} + \f{1}{r}  C_{tt}^{(4)} +  \cO(r^{-3})\,, \\
\Gamma^r_{t\phi}& = \frac{1}{r} C_{t\phi}^{(4)}+\cO(r^{-3})\,,\\ 
\Gamma^r_{\phi\phi}& =  - \f{r^3}{\ell^2} e^{2\varphi} + \f{1}{r} C_{\phi\phi}^{(4)}+\cO(r^{-3})\,,\\
\Gamma^t_{rt} &=\frac{1}{r} + \frac{1}{r^3} \ell^2C_{tt}^{(2)}e^{-2\varphi} +\cO(r^{-5})\,,\\
\Gamma^t_{r\phi}& = \frac{\ell^2}{r^3}e^{-2\varphi}{C_{t\phi}^{(2)}} +\cO(r^{-5})\,,\\
{\color{blue}}
\Gamma^t_{tt} &=\Bigl(1 + \frac{1}{r^2} \ell^2C_{tt}^{(2)}e^{-2\varphi} \Bigl) \pa_t \varphi - \frac{\ell^2}{2r^2} e^{-2\varphi}\pa_t C_{tt}^{(2)} +\frac{1}{r^2}e^{-2\varphi} C_{t\phi}^{(2)} \pa_\phi \varphi +\cO(r^{-4}) \,,\\
\Gamma^t_{t\phi} &=  \Bigl(1 + \frac{1}{r^2} e^{-2\varphi}\ell^2C_{tt}^{(2)} \Bigl)\pa_\phi \varphi - \frac{\ell^2}{2r^2}e^{-2\varphi} \pa_\phi C_{tt}^{(2)} + \frac{\ell^2}{r^2} e^{-2\varphi} C_{t\phi}^{(2)} \pa_t \varphi +\cO(r^{-4})\,, \\
\Gamma^t_{\phi \phi} &= \Bigl(\ell^2 + \frac{\ell^4}{r^2}e^{-2\varphi}C_{tt}^{(2)}\Bigl)\pa_t \varphi+ \frac{\ell^2}{2r^2} e^{-2\varphi} \pa_t C_{\phi\phi}^{(2)} - \frac{\ell^2}{r^2} e^{-2\varphi} \pa_\phi C_{t\phi}^{(2)} + \frac{\ell^2}{r^2}  e^{-2\varphi}C_{t\phi}^{(2)} \pa_\phi \varphi +\cO(r^{-4})\,, \\
\Gamma^\phi_{rt} &= - \frac{1}{r^3} e^{-2\varphi}C_{t\phi}^{(2)} +\cO(r^{-5})\,,\\
\Gamma^\phi_{r\phi} &= \frac{1}{r} - \frac{e^{-2\varphi}}{r^3} C_{\phi\phi}^{(2)}+\cO(r^{-5})\,,\\
\Gamma^\phi_{tt} &=  \Bigl(\frac{1}{\ell^2} - \frac{1}{r^2}C_{tt}^{(2)} e^{-2\varphi}\Bigl) \pa_\phi \varphi  - \frac{1}{r^2} e^{-2\varphi} C_{t\phi}^{(2)}  \pa_t \varphi  - \frac{1}{2r^2} e^{-2\varphi}\pa_\phi C_{tt}^{(2)} +\frac{1}{r^2} e^{-2\varphi} \pa_t C_{t\phi}^{(2)} +\cO(r^{-4})\,,\\
\Gamma^\phi_{t\phi} &=  \Bigl(1 -\frac{1}{r^2}C_{\phi\phi}^{(2)}e^{-2\varphi}\Bigl)\pa_t \varphi +\frac{1}{2r^2} e^{-2\varphi} \pa_t C_{\phi\phi}^{(2)} - \frac{1}{r^2} C_{t\phi}^{(2)}e^{-2\varphi}\pa_\phi \varphi +\cO(r^{-4})\,,\\
\Gamma^\phi_{\phi\phi}& = \Bigl(1 - \frac{1}{r^2}C_{\phi\phi}^{(2)}e^{-2\varphi}\Bigl) \pa_\phi \varphi - \frac{\ell^2}{r^2} e^{-2\varphi} C_{t\phi}^{(2)} \pa_t \varphi +\frac{1}{2r^2} e^{-2\varphi} \pa_\phi C_{\phi\phi}^{(2)} +\cO(r^{-4})\,.
\end{aligned}
\end{equation}

\bibliographystyle{bib-style2.bst}
\bibliography{biblio}

@article{Brown:1986nw,
    author = "Brown, J. David and Henneaux, M.",
    title = "{Central Charges in the Canonical Realization of Asymptotic Symmetries: An Example from Three-Dimensional Gravity}",
    doi = "10.1007/BF01211590",
    journal = "Commun. Math. Phys.",
    volume = "104",
    pages = "207--226",
    year = "1986"
}

@article{Barnich:2001jy,
    author = "Barnich, Glenn and Brandt, Friedemann",
    title = "{Covariant theory of asymptotic symmetries, conservation laws and central charges}",
    eprint = "hep-th/0111246",
    archivePrefix = "arXiv",
    reportNumber = "ULB-TH-01-19, MPI-MIS-94-2001",
    doi = "10.1016/S0550-3213(02)00251-1",
    journal = "Nucl. Phys. B",
    volume = "633",
    pages = "3--82",
    year = "2002"
}

@article{Barnich:2003xg,
    author = "Barnich, Glenn",
    title = "{Boundary charges in gauge theories: Using Stokes theorem in the bulk}",
    eprint = "hep-th/0301039",
    archivePrefix = "arXiv",
    reportNumber = "ULB-TH-03-01",
    doi = "10.1088/0264-9381/20/16/310",
    journal = "Class. Quant. Grav.",
    volume = "20",
    pages = "3685--3698",
    year = "2003"
}

@article{Barnich:2007bf,
    author = "Barnich, Glenn and Compere, Geoffrey",
    title = "{Surface charge algebra in gauge theories and thermodynamic integrability}",
    eprint = "0708.2378",
    archivePrefix = "arXiv",
    primaryClass = "gr-qc",
    reportNumber = "ULB-TH-06-30",
    doi = "10.1063/1.2889721",
    journal = "J. Math. Phys.",
    volume = "49",
    pages = "042901",
    year = "2008"
}

@article{Iyer:1994ys,
    author = "Iyer, Vivek and Wald, Robert M.",
    title = "{Some properties of Noether charge and a proposal for dynamical black hole entropy}",
    eprint = "gr-qc/9403028",
    archivePrefix = "arXiv",
    doi = "10.1103/PhysRevD.50.846",
    journal = "Phys. Rev. D",
    volume = "50",
    pages = "846--864",
    year = "1994"
}

@article{Lee:1990nz,
	author = {Lee, J. and Wald, Robert M.},
	doi = {10.1063/1.528801},
	journal = {J. Math. Phys.},
	pages = {725--743},
	title = {{Local symmetries and constraints}},
	volume = {31},
	year = {1990}}

@article{anderson1992introduction,
	author = {Anderson, Ian M},
	journal = {Contemporary Mathematics},
	title = {Introduction to the variational bicomplex},
	volume = {132},
	year = {1992}}

@article{Henningson:1998gx,
    author = "Henningson, M. and Skenderis, K.",
    title = "{The Holographic Weyl anomaly}",
    eprint = "hep-th/9806087",
    archivePrefix = "arXiv",
    reportNumber = "CERN-TH-98-188, KUL-TF-98-21",
    doi = "10.1088/1126-6708/1998/07/023",
    journal = "JHEP",
    volume = "07",
    pages = "023",
    year = "1998"
}

@article{Compere:2020lrt,
    author = "Comp{\`e}re, Geoffrey and Fiorucci, Adrien and Ruzziconi, Romain",
    title = "{The $\Lambda$-BMS$_4$ charge algebra}",
    eprint = "2004.10769",
    archivePrefix = "arXiv",
    primaryClass = "hep-th",
    doi = "10.1007/JHEP10(2020)205",
    journal = "JHEP",
    volume = "10",
    pages = "205",
    year = "2020"
}

@article{Barnich:2010eb,
    author = "Barnich, Glenn and Troessaert, Cedric",
    title = "{Aspects of the BMS/CFT correspondence}",
    eprint = "1001.1541",
    archivePrefix = "arXiv",
    primaryClass = "hep-th",
    reportNumber = "ULB-TH-09-28",
    doi = "10.1007/JHEP05(2010)062",
    journal = "JHEP",
    volume = "05",
    pages = "062",
    year = "2010"
}

@article{Freidel:2021cjp,
    author = "Freidel, Laurent and Oliveri, Roberto and Pranzetti, Daniele and Speziale, Simone",
    title = "{Extended corner symmetry, charge bracket and Einstein\textquoteright{}s equations}",
    eprint = "2104.12881",
    archivePrefix = "arXiv",
    primaryClass = "hep-th",
    doi = "10.1007/JHEP09(2021)083",
    journal = "JHEP",
    volume = "09",
    pages = "083",
    year = "2021"
}

@article{Donnelly:2016auv,
    author = "Donnelly, William and Freidel, Laurent",
    title = "{Local subsystems in gauge theory and gravity}",
    eprint = "1601.04744",
    archivePrefix = "arXiv",
    primaryClass = "hep-th",
    doi = "10.1007/JHEP09(2016)102",
    journal = "JHEP",
    volume = "09",
    pages = "102",
    year = "2016"
}

@article{Freidel:2020xyx,
    author = "Freidel, Laurent and Geiller, Marc and Pranzetti, Daniele",
    title = "{Edge modes of gravity. Part I. Corner potentials and charges}",
    eprint = "2006.12527",
    archivePrefix = "arXiv",
    primaryClass = "hep-th",
    doi = "10.1007/JHEP11(2020)026",
    journal = "JHEP",
    volume = "11",
    pages = "026",
    year = "2020"
}

@article{Freidel:2021fxf,
    author = "Freidel, Laurent and Oliveri, Roberto and Pranzetti, Daniele and Speziale, Simone",
    title = "{The Weyl BMS group and Einstein{\textquoteright}s equations}",
    eprint = "2104.05793",
    archivePrefix = "arXiv",
    primaryClass = "hep-th",
    doi = "10.1007/JHEP07(2021)170",
    journal = "JHEP",
    volume = "07",
    pages = "170",
    year = "2021"
}

@article{Compere:2018aar,
    author = "Comp\`ere, Geoffrey and Fiorucci, Adrien",
    title = "{Advanced Lectures on General Relativity}",
    eprint = "1801.07064",
    archivePrefix = "arXiv",
    primaryClass = "hep-th",
    month = "1",
    year = "2018"
}

@article{Chandrasekaran:2020wwn,
    author = "Chandrasekaran, Venkatesa and Speranza, Antony J.",
    title = "{Anomalies in gravitational charge algebras of null boundaries and black hole entropy}",
    eprint = "2009.10739",
    archivePrefix = "arXiv",
    primaryClass = "hep-th",
    doi = "10.1007/JHEP01(2021)137",
    journal = "JHEP",
    volume = "01",
    pages = "137",
    year = "2021"
}

@article{Ashtekar:1981bq,
    author = "Ashtekar, A. and Streubel, M.",
    title = "{Symplectic Geometry of Radiative Modes and Conserved Quantities at Null Infinity}",
    doi = "10.1098/rspa.1981.0109",
    journal = "Proc. Roy. Soc. Lond. A",
    volume = "376",
    pages = "585--607",
    year = "1981"
}

@article{Henneaux:2019sjx,
    author = "Henneaux, Marc and Merbis, Wout and Ranjbar, Arash",
    title = "{Asymptotic dynamics of AdS$_3$ gravity with two asymptotic regions}",
    eprint = "1912.09465",
    archivePrefix = "arXiv",
    primaryClass = "hep-th",
    doi = "10.1007/JHEP03(2020)064",
    journal = "JHEP",
    volume = "03",
    pages = "064",
    year = "2020"
}

@article{Margalef-Bentabol:2020teu,
    author = "Margalef-Bentabol, Juan and Villase{\~n}or, Eduardo J. S.",
    title = "{Geometric formulation of the Covariant Phase Space methods with boundaries}",
    eprint = "2008.01842",
    archivePrefix = "arXiv",
    primaryClass = "math-ph",
    doi = "10.1103/PhysRevD.103.025011",
    journal = "Phys. Rev. D",
    volume = "103",
    number = "2",
    pages = "025011",
    year = "2021"
}

@article{Wieland:2021eth,
    author = "Wieland, Wolfgang",
    title = "{Barnich{\textendash}Troessaert bracket as a Dirac bracket on the covariant phase space}",
    eprint = "2104.08377",
    archivePrefix = "arXiv",
    primaryClass = "hep-th",
    doi = "10.1088/1361-6382/ac3e52",
    journal = "Class. Quant. Grav.",
    volume = "39",
    number = "2",
    pages = "025016",
    year = "2022"
}

@article{Ciambelli:2021vnn,
    author = "Ciambelli, Luca and Leigh, Robert G.",
    title = "{Isolated surfaces and symmetries of gravity}",
    eprint = "2104.07643",
    archivePrefix = "arXiv",
    primaryClass = "hep-th",
    doi = "10.1103/PhysRevD.104.046005",
    journal = "Phys. Rev. D",
    volume = "104",
    number = "4",
    pages = "046005",
    year = "2021"
}

@article{Carrozza:2021gju,
    author = "Carrozza, Sylvain and Hoehn, Philipp A.",
    title = "{Edge modes as reference frames and boundary actions from post-selection}",
    eprint = "2109.06184",
    archivePrefix = "arXiv",
    primaryClass = "hep-th",
    doi = "10.1007/JHEP02(2022)172",
    journal = "JHEP",
    volume = "02",
    pages = "172",
    year = "2022"
}

@article{Adami:2021nnf,
    author = "Adami, H. and Grumiller, D. and Sheikh-Jabbari, M. M. and Taghiloo, V. and Yavartanoo, H. and Zwikel, C.",
    title = "{Null boundary phase space: slicings, news {\&} memory}",
    eprint = "2110.04218",
    archivePrefix = "arXiv",
    primaryClass = "hep-th",
    doi = "10.1007/JHEP11(2021)155",
    journal = "JHEP",
    volume = "11",
    pages = "155",
    year = "2021"
}

@article{Chandrasekaran:2021vyu,
    author = "Chandrasekaran, Venkatesa and Flanagan, Eanna E. and Shehzad, Ibrahim and Speranza, Antony J.",
    title = "{A general framework for gravitational charges and holographic renormalization}",
    eprint = "2111.11974",
    archivePrefix = "arXiv",
    primaryClass = "gr-qc",
    doi = "10.1142/S0217751X22501056",
    journal = "Int. J. Mod. Phys. A",
    volume = "37",
    number = "17",
    pages = "2250105",
    year = "2022"
}

@article{Freidel:2021dxw,
    author = "Freidel, Laurent",
    title = "{A canonical bracket for open gravitational system}",
    eprint = "2111.14747",
    archivePrefix = "arXiv",
    primaryClass = "hep-th",
    month = "11",
    year = "2021"
}

@article{Carrozza:2022xut,
    author = "Carrozza, Sylvain and Eccles, Stefan and Hoehn, Philipp A.",
    title = "{Edge modes as dynamical frames: charges from post-selection in generally covariant theories}",
    eprint = "2205.00913",
    archivePrefix = "arXiv",
    primaryClass = "hep-th",
    doi = "10.21468/SciPostPhys.17.2.048",
    journal = "SciPost Phys.",
    volume = "17",
    number = "2",
    pages = "048",
    year = "2024"
}

@article{Nagy:2022xxs,
    author = "Nagy, Silvia and Peraza, Javier",
    title = "{Radiative phase space extensions at all orders in r for self-dual Yang-Mills and gravity}",
    eprint = "2211.12991",
    archivePrefix = "arXiv",
    primaryClass = "hep-th",
    doi = "10.1007/JHEP02(2023)202",
    journal = "JHEP",
    volume = "02",
    pages = "202",
    year = "2023"
}

@article{Kabel:2023jve,
    author = "Kabel, Viktoria and Brukner, {\v{C}}aslav and Wieland, Wolfgang",
    title = "{Quantum reference frames at the boundary of spacetime}",
    eprint = "2302.11629",
    archivePrefix = "arXiv",
    primaryClass = "gr-qc",
    doi = "10.1103/PhysRevD.108.106022",
    journal = "Phys. Rev. D",
    volume = "108",
    number = "10",
    pages = "106022",
    year = "2023"
}

@article{Klinger:2023qna,
    author = "Klinger, Marc S. and Leigh, Robert G. and Pai, Pin-Chun",
    title = "{Extended phase space in general gauge theories}",
    eprint = "2303.06786",
    archivePrefix = "arXiv",
    primaryClass = "hep-th",
    doi = "10.1016/j.nuclphysb.2023.116404",
    journal = "Nucl. Phys. B",
    volume = "998",
    pages = "116404",
    year = "2024"
}

@article{FrancoisAndre:2023jmj,
    author = "Francois Andre, Jordan T.",
    title = "{The dressing field method for diffeomorphisms: a relational framework}",
    eprint = "2310.14472",
    archivePrefix = "arXiv",
    primaryClass = "math-ph",
    doi = "10.1088/1751-8121/ad5cad",
    journal = "J. Phys. A",
    volume = "57",
    number = "30",
    pages = "305203",
    year = "2024"
}

@article{Baulieu:2024oql,
    author = "Baulieu, Laurent and Wetzstein, Tom",
    title = "{BRST covariant phase space and holographic Ward identities}",
    eprint = "2405.18898",
    archivePrefix = "arXiv",
    primaryClass = "hep-th",
    doi = "10.1007/JHEP10(2024)055",
    journal = "JHEP",
    volume = "10",
    pages = "055",
    year = "2024"
}

@article{Adami:2024gdx,
    author = "Adami, H. and Golshani, M. and Sheikh-Jabbari, M. M. and Taghiloo, V. and Vahidinia, M. H.",
    title = "{Covariant phase space formalism for fluctuating boundaries}",
    eprint = "2407.03259",
    archivePrefix = "arXiv",
    primaryClass = "hep-th",
    doi = "10.1007/JHEP09(2024)157",
    journal = "JHEP",
    volume = "09",
    pages = "157",
    year = "2024"
}

@article{Langenscheidt:2024nyw,
    author = "Langenscheidt, Simon and Oriti, Daniele",
    title = "{New edge modes and corner charges for first-order symmetries of 4D gravity}",
    eprint = "2408.01809",
    archivePrefix = "arXiv",
    primaryClass = "gr-qc",
    doi = "10.1088/1361-6382/adbfee",
    journal = "Class. Quant. Grav.",
    volume = "42",
    number = "7",
    pages = "075010",
    year = "2025"
}

@article{Baulieu:2024rfp,
    author = "Baulieu, Laurent and Wetzstein, Tom and Wu, Siye",
    title = "{BRST Noether theorem and corner charge bracket}",
    eprint = "2411.17829",
    archivePrefix = "arXiv",
    primaryClass = "hep-th",
    doi = "10.1007/JHEP03(2026)059",
    journal = "JHEP",
    volume = "03",
    pages = "059",
    year = "2026"
}

@article{Neri:2025fsh,
    author = "Neri, Giulio and Varrin, Ludovic",
    title = "{Orbit method for quantum corner symmetries}",
    eprint = "2507.10683",
    archivePrefix = "arXiv",
    primaryClass = "hep-th",
    doi = "10.1103/hljn-f9nc",
    journal = "Phys. Rev. D",
    volume = "112",
    number = "10",
    pages = "104010",
    year = "2025"
}

@article{Golshani:2026lvc,
    author = "Golshani, M. and Sheikh-Jabbari, M. M. and Taghiloo, V. and Vahidinia, M. H.",
    title = "{Geometric aspects of Covariant Phase Space Formalism: Solution Space Slicings and surface charge integrability}",
    eprint = "2603.14439",
    archivePrefix = "arXiv",
    primaryClass = "hep-th",
    doi = "10.1007/JHEP08(2026)099",
    journal = "JHEP",
    volume = "08",
    pages = "099",
    year = "2026"
}

@article{Fefferman:2007rka,
    author = "Fefferman, Charles and Graham, C. Robin",
    title = "{The ambient metric}",
    eprint = "0710.0919",
    archivePrefix = "arXiv",
    primaryClass = "math.DG",
    journal = "Ann. Math. Stud.",
    volume = "178",
    pages = "1--128",
    year = "2011"
}

@article{Grumiller:2017sjh,
    author = "Grumiller, Daniel and Merbis, Wout and Riegler, Max",
    title = "{Most general flat space boundary conditions in three-dimensional Einstein gravity}",
    eprint = "1704.07419",
    archivePrefix = "arXiv",
    primaryClass = "hep-th",
    reportNumber = "TUW-17-04",
    doi = "10.1088/1361-6382/aa8004",
    journal = "Class. Quant. Grav.",
    volume = "34",
    number = "18",
    pages = "184001",
    year = "2017"
}

@article{Grumiller:2016pqb,
    author = "Grumiller, Daniel and Riegler, Max",
    title = "{Most general AdS$_{3}$ boundary conditions}",
    eprint = "1608.01308",
    archivePrefix = "arXiv",
    primaryClass = "hep-th",
    reportNumber = "TUW-16-16",
    doi = "10.1007/JHEP10(2016)023",
    journal = "JHEP",
    volume = "10",
    pages = "023",
    year = "2016"
}

@article{Rignon-Bret:2024wlu,
    author = "Rignon-Bret, Antoine and Speziale, Simone",
    title = "{General covariance and boundary symmetry algebras}",
    eprint = "2403.00730",
    archivePrefix = "arXiv",
    primaryClass = "hep-th",
    doi = "10.1209/0295-5075/adc2cc",
    journal = "EPL",
    volume = "149",
    number = "6",
    pages = "69002",
    year = "2025"
}

@article{Odak:2022ndm,
    author = "Odak, Gloria and Rignon-Bret, Antoine and Speziale, Simone",
    title = "{Wald-Zoupas prescription with soft anomalies}",
    eprint = "2212.07947",
    archivePrefix = "arXiv",
    primaryClass = "hep-th",
    doi = "10.1103/PhysRevD.107.084028",
    journal = "Phys. Rev. D",
    volume = "107",
    number = "8",
    pages = "084028",
    year = "2023"
}

@article{Geiller:2019bti,
    author = "Geiller, Marc and Jai-akson, Puttarak",
    title = "{Extended actions, dynamics of edge modes, and entanglement entropy}",
    eprint = "1912.06025",
    archivePrefix = "arXiv",
    primaryClass = "hep-th",
    doi = "10.1007/JHEP09(2020)134",
    journal = "JHEP",
    volume = "09",
    pages = "134",
    year = "2020"
}

@article{Geiller:2017whh,
    author = "Geiller, Marc",
    title = "{Lorentz-diffeomorphism edge modes in 3d gravity}",
    eprint = "1712.05269",
    archivePrefix = "arXiv",
    primaryClass = "gr-qc",
    doi = "10.1007/JHEP02(2018)029",
    journal = "JHEP",
    volume = "02",
    pages = "029",
    year = "2018"
}

@article{Geiller:2017xad,
    author = "Geiller, Marc",
    title = "{Edge modes and corner ambiguities in 3d Chern{\textendash}Simons theory and gravity}",
    eprint = "1703.04748",
    archivePrefix = "arXiv",
    primaryClass = "gr-qc",
    doi = "10.1016/j.nuclphysb.2017.09.010",
    journal = "Nucl. Phys. B",
    volume = "924",
    pages = "312--365",
    year = "2017"
}

@article{Ruzziconi:2020wrb,
    author = "Ruzziconi, Romain and Zwikel, C{\'e}line",
    title = "{Conservation and Integrability in Lower-Dimensional Gravity}",
    eprint = "2012.03961",
    archivePrefix = "arXiv",
    primaryClass = "hep-th",
    doi = "10.1007/JHEP04(2021)034",
    journal = "JHEP",
    volume = "04",
    pages = "034",
    year = "2021"
}

@article{Freidel:2020ayo,
    author = "Freidel, Laurent and Geiller, Marc and Pranzetti, Daniele",
    title = "{Edge modes of gravity. Part III. Corner simplicity constraints}",
    eprint = "2007.12635",
    archivePrefix = "arXiv",
    primaryClass = "hep-th",
    doi = "10.1007/JHEP01(2021)100",
    journal = "JHEP",
    volume = "01",
    pages = "100",
    year = "2021"
}

@article{Ciambelli:2022cfr,
    author = "Ciambelli, Luca and Leigh, Robert G.",
    title = "{Universal corner symmetry and the orbit method for gravity}",
    eprint = "2207.06441",
    archivePrefix = "arXiv",
    primaryClass = "hep-th",
    doi = "10.1016/j.nuclphysb.2022.116053",
    journal = "Nucl. Phys. B",
    volume = "986",
    pages = "116053",
    year = "2023"
}

@article{Ciambelli:2021nmv,
    author = "Ciambelli, Luca and Leigh, Robert G. and Pai, Pin-Chun",
    title = "{Embeddings and Integrable Charges for Extended Corner Symmetry}",
    eprint = "2111.13181",
    archivePrefix = "arXiv",
    primaryClass = "hep-th",
    doi = "10.1103/PhysRevLett.128.171302",
    journal = "Phys. Rev. Lett.",
    volume = "128",
    year = "2022"
}

@article{Ashtekar:1990gc,
    author = "Ashtekar, Abhay and Bombelli, Luca and Reula, Oscar",
    title = "{THE COVARIANT PHASE SPACE OF ASYMPTOTICALLY FLAT GRAVITATIONAL FIELDS}",
    reportNumber = "PRINT-90-0318 (SYRACUSE)",
    month = "5",
    year = "1990"
}

@article{Barnich:2009se,
    author = "Barnich, Glenn and Troessaert, Cedric",
    title = "{Symmetries of asymptotically flat 4 dimensional spacetimes at null infinity revisited}",
    eprint = "0909.2617",
    archivePrefix = "arXiv",
    primaryClass = "gr-qc",
    reportNumber = "ULB-TH-09-24",
    doi = "10.1103/PhysRevLett.105.111103",
    journal = "Phys. Rev. Lett.",
    volume = "105",
    pages = "111103",
    year = "2010"
}

@article{Freidel:2020svx,
    author = "Freidel, Laurent and Geiller, Marc and Pranzetti, Daniele",
    title = "{Edge modes of gravity. Part II. Corner metric and Lorentz charges}",
    eprint = "2007.03563",
    archivePrefix = "arXiv",
    primaryClass = "hep-th",
    doi = "10.1007/JHEP11(2020)027",
    journal = "JHEP",
    volume = "11",
    pages = "027",
    year = "2020"
}

@article{Wald:1999wa,
    author = "Wald, Robert M. and Zoupas, Andreas",
    title = "{A General definition of 'conserved quantities' in general relativity and other theories of gravity}",
    eprint = "gr-qc/9911095",
    archivePrefix = "arXiv",
    doi = "10.1103/PhysRevD.61.084027",
    journal = "Phys. Rev. D",
    volume = "61",
    pages = "084027",
    year = "2000"
}

@article{Alessio:2020ioh,
    author = "Alessio, Francesco and Barnich, Glenn and Ciambelli, Luca and Mao, Pujian and Ruzziconi, Romain",
    title = "{Weyl charges in asymptotically locally AdS$_3$ spacetimes}",
    eprint = "2010.15452",
    archivePrefix = "arXiv",
    primaryClass = "hep-th",
    reportNumber = "CJQS-2021-002",
    doi = "10.1103/PhysRevD.103.046003",
    journal = "Phys. Rev. D",
    volume = "103",
    number = "4",
    pages = "046003",
    year = "2021"
}

@article{Compere:2013bya,
    author = "Comp\`ere, Geoffrey and Song, Wei and Strominger, Andrew",
    title = "{New Boundary Conditions for AdS3}",
    eprint = "1303.2662",
    archivePrefix = "arXiv",
    primaryClass = "hep-th",
    doi = "10.1007/JHEP05(2013)152",
    journal = "JHEP",
    volume = "05",
    pages = "152",
    year = "2013"
}

@article{Troessaert:2013fma,
    author = "Troessaert, C\'edric",
    title = "{Enhanced asymptotic symmetry algebra of $AdS_{3}$}",
    eprint = "1303.3296",
    archivePrefix = "arXiv",
    primaryClass = "hep-th",
    doi = "10.1007/JHEP08(2013)044",
    journal = "JHEP",
    volume = "08",
    pages = "044",
    year = "2013"
}

@article{Speranza:2017gxd,
    author = "Speranza, Antony J.",
    title = "{Local phase space and edge modes for diffeomorphism-invariant theories}",
    eprint = "1706.05061",
    archivePrefix = "arXiv",
    primaryClass = "hep-th",
    doi = "10.1007/JHEP02(2018)021",
    journal = "JHEP",
    volume = "02",
    pages = "021",
    year = "2018"
}

@article{Barnich:2011mi,
    author = "Barnich, Glenn and Troessaert, Cedric",
    title = "{BMS charge algebra}",
    eprint = "1106.0213",
    archivePrefix = "arXiv",
    primaryClass = "hep-th",
    reportNumber = "ULB-TH-11-10",
    doi = "10.1007/JHEP12(2011)105",
    journal = "JHEP",
    volume = "12",
    pages = "105",
    year = "2011"
}

@article{Skenderis:2000in,
    author = "Skenderis, Kostas",
    editor = "Semikhatov, A. and Vasilev, M. and Zaikin, V.",
    title = "{Asymptotically Anti-de Sitter space-times and their stress energy tensor}",
    eprint = "hep-th/0010138",
    archivePrefix = "arXiv",
    reportNumber = "PUPT-1963",
    doi = "10.1142/S0217751X0100386X",
    journal = "Int. J. Mod. Phys. A",
    volume = "16",
    pages = "740--749",
    year = "2001"
}

@article{Skenderis:1999nb,
    author = "Skenderis, Kostas and Solodukhin, Sergey N.",
    title = "{Quantum effective action from the AdS / CFT correspondence}",
    eprint = "hep-th/9910023",
    archivePrefix = "arXiv",
    reportNumber = "SPIN-1999-23",
    doi = "10.1016/S0370-2693(99)01467-7",
    journal = "Phys. Lett. B",
    volume = "472",
    pages = "316--322",
    year = "2000"
}

@article{Fiorucci:2020xto,
    author = "Fiorucci, Adrien and Ruzziconi, Romain",
    title = "{Charge algebra in Al(A)dS$_{n}$ spacetimes}",
    eprint = "2011.02002",
    archivePrefix = "arXiv",
    primaryClass = "hep-th",
    doi = "10.1007/JHEP05(2021)210",
    journal = "JHEP",
    volume = "05",
    pages = "210",
    year = "2021"
}

@article{Balasubramanian:1999re,
    author = "Balasubramanian, Vijay and Kraus, Per",
    title = "{A Stress tensor for Anti-de Sitter gravity}",
    eprint = "hep-th/9902121",
    archivePrefix = "arXiv",
    reportNumber = "HUTP-99-A002, EFI-99-6, NSF-ITP-98-132",
    doi = "10.1007/s002200050764",
    journal = "Commun. Math. Phys.",
    volume = "208",
    pages = "413--428",
    year = "1999"
}

@article{Gibbons:1976ue,
    author = "Gibbons, G. W. and Hawking, S. W.",
    title = "{Action Integrals and Partition Functions in Quantum Gravity}",
    reportNumber = "PRINT-76-0995 (CAMBRIDGE)",
    doi = "10.1103/PhysRevD.15.2752",
    journal = "Phys. Rev. D",
    volume = "15",
    pages = "2752--2756",
    year = "1977"
}

@article{Brown:1992br,
    author = "Brown, J. David and York, Jr., James W.",
    title = "{Quasilocal energy and conserved charges derived from the gravitational action}",
    eprint = "gr-qc/9209012",
    archivePrefix = "arXiv",
    reportNumber = "IFP-423-UNC, TAR-009-UNC",
    doi = "10.1103/PhysRevD.47.1407",
    journal = "Phys. Rev. D",
    volume = "47",
    pages = "1407--1419",
    year = "1993"
}

@article{deHaro:2000vlm,
    author = "de Haro, Sebastian and Solodukhin, Sergey N. and Skenderis, Kostas",
    title = "{Holographic reconstruction of space-time and renormalization in the AdS / CFT correspondence}",
    eprint = "hep-th/0002230",
    archivePrefix = "arXiv",
    reportNumber = "SPIN-2000-05, ITP-UU-00-03, PUTP-1921",
    doi = "10.1007/s002200100381",
    journal = "Commun. Math. Phys.",
    volume = "217",
    pages = "595--622",
    year = "2001"
}

\end{document}